\def\dprod{\mathop{\displaystyle \prod }}%
\newcommand{\D}{\mathcal D}
\newcommand{\expe}{\mathbb{E}}
\newcommand{\ind}{\mathbb{I}}
\newcommand{\W}{\mathcal W}
\newtheorem{theorem}{Theorem}
\newtheorem{corollary}[theorem]{Corollary}
\newtheorem{proposition}[theorem]{Proposition}
\newtheorem{lemma}[theorem]{Lemma}
\theoremstyle{definition}
\begin{document}

\title{On a causal quantum stochastic double product integral related to L%
\'{e}vy area }
\author{R L Hudson \\
Mathematics Department,\\
Loughborough University,\\
Loughborough,\\
Leicestershire LE11 3TU,\\
Great Britain. \and Y Pei \\
Mathematics Institute,\\
University of Warwick,\\
Coventry CV4 7AL,\\
Great Britain.}
\date{}
\maketitle

\section*{Abstract}
We study the family of causal double product integrals%
\begin{equation*}
  \prod_{a < x < y < b}\left(1 + i{\lambda \over 2}(dP_x dQ_y - dQ_x dP_y) + i {\mu \over 2}(dP_x dP_y + dQ_x dQ_y)\right)
\end{equation*}%
where $P$ and $Q$ are the mutually noncommuting momentum and position
Brownian motions of quantum stochastic calculus. The evaluation is motivated
heuristically by approximating the continuous double product by a discrete
product in which infinitesimals are replaced by finite increments. 
The latter is in turn approximated by the second quantisation of a discrete double product of rotation-like operators in different planes due to a result in \cite{hudson-pei15}. 
The main problem solved in this paper is the explicit evaluation of the continuum limit $W$ of the latter, and showing that $W$ is a unitary operator. 
The kernel of $W$ is written in terms of Bessel functions, and the evaluation is achieved by working on a lattice path model and enumerating linear extensions of related partial orderings, where the enumeration turns out to be heavily related to Dyck paths and generalisations of Catalan numbers.

\textit{AMS\ Subject Classification: }81S25, 05A15, 06A07

\textit{Keywords: }$\ $causal double product, L\'{e}vy's stochastic area,
position and momentum Brownian motions, linear extensions, Catalan numbers, Dyck paths.

\section{Introduction}

Following Volterra's philosophy of product integrals as continuous limits of discrete products \cite{slavik07}, quantum stochastic double product integrals
of \emph{rectangular} type have been constructed \cite{hudson-pei15}\ as limits of
discrete approximations obtained by replacing stochastic differentials by
discrete increments of the corresponding processes. Such constructions are
partially intuitive in character, involving nonrigorous manipulations of
unbounded operators. Nevertheless they can be shown to yield explicit
rigorously unitary operators which can then be shown in some
cases \cite{hudson-jones12}\ to satisfy the quantum stochastic differential equations
(qsde's) whose solutions provide the rigorous definition of the product
integral.

In this paper we initiate the much harder problem of constructing so-called 
\emph{causal} (or triangular) double product integrals in the same way,
first constructing discrete approximations by intuitive methods, which are
then shown rigorously to enjoy the property of unitarity, which will
allow rigorous verification of the qsde definitions.

The \emph{Fock space} $\mathcal{F}\left( \mathcal{H}\right) $ over a Hilbert
space $\mathcal{H}$ is conveniently defined \cite{parthasarathy92} as the Hilbert space
generated by the \emph{exponential vectors} $e\left( f\right) ,f\in \mathcal{%
H,}$ satisfying%
\begin{equation*}
\left\langle e\left( f\right) ,e\left( g\right) \right\rangle =\exp
\left\langle f,g\right\rangle ,\text{ }f,g\in \mathcal{H.} 
\end{equation*}

Rectangular product integrals live in the tensor product of two Fock spaces.
This form of "double" construction was originally motivated by its use to
construct explicit solutions of the quantum Yang-Baxter equation with a
given classical limit \cite{hudson07}, \cite{hudson-pulmannova05}, of purely algebraic
character as formal power series. From the analytic point of view, the
alternative causal constructs which are studied in the present paper which
live naturally in a single Fock space are of wider interest.

One example which we consider in some detail is closely related to L\'{e}%
vy's stochastic area \cite{levy51},\ and in particular to the L\'{e}vy area
formula for its characteristic function. In effect we replace the planar
Brownian motion \ by a quantum version in which the two components are the
mutually noncommuting momentum and position Brownian motions $P$ and $Q$ of
quantum stochastic calculus \cite{cockroft-hudson77}, which despite noncommutativity, can
be shown to be independent in a certain sense \cite{hudson13}. Other
noncommutative analogs of L\'{e}vy area are based on free probability \cite%
{capitaine-donatimartin00}; our own less radically noncommutative form is directly related to
physical applications \cite{hush-carvalho-hedges-james13}. It may also offer mathematically
significant relations, for example to Riemann zeta values through the links
to Euler and Bernoulli numbers \cite{ikeda-taniguchi10} of the classical L\'{e}vy area
formula. This is because, while the corresponding probability distribution
is the atomic one concentrated at zero, it deforms naturally to the
classical distribution at infinite temperature as the Fock "zero
temperature" momentum and position processes $P$ and $Q$ are deformed
through corresponding finite temperature processes \cite{chen-hudson13} to mutually
commuting independent Brownian motions.

We denote rectangular and causal product integrals by 
\begin{equation}
\dprod_{\lbrack a,b)\times \lbrack c,d)}{}\text{ }\left( 1+dr\right)
,\dprod_{<_{[a,b)}}\left( 1+dr\right)   \label{1}
\end{equation}
respectively, where $<_{[a,b)}$ is the set $\left\{ \left( x,y\right) \in 
\mathbb{R}^{2}:a\leq x\,<y<\,b\right\} .$ They are operators in the Hilbert
spaces $\mathcal{F}\left( L^{2}\left( [a,b)\right) \right) \otimes $ $%
\mathcal{F}\left( L^{2}\left( [c,d)\right) \right) $ and $\mathcal{F}\left(
L^{2}\left( [a,b)\right) \right) $ respectively. Both are characterised by
the \emph{generator} $dr$ which is a second rank tensor over the complex
vector space $\mathcal{I=}\mathbb{C}\left( dP,dQ,dT\right) $ of
differentials of the fundamental stochastic processes $P,$ $Q$ and the time
process $T$ of the calculus. They have rigorous definitions as solutions of
either forward or backward adapted quantum stochastic differential equations 
\cite{hudson14} in which $b$ or $a$ in \ref{1} is the time variable. They are
related by the \emph{coboundary relation}%
\begin{eqnarray*}
\dprod_{<_{[a,c)}}{}\left( 1+dr\right) &=&\left( \dprod_{<_{[a,b)}}{}\left(
1+dr\right) \otimes I\right) \dprod_{[a,b)\times \lbrack b,c)}{}\text{ }%
\left( 1+dr\right) \\
&&\left( I\otimes \dprod_{<_{[b,c)}}{}\left( 1+dr\right) \right)
\end{eqnarray*}%
in which the Fock space $\mathcal{F}\left( L^{2}\left( [a,c)\right) \right) $
is canonically split at time $b\in \lbrack a,c);$%
\begin{eqnarray*}
\mathcal{F}\left( L^{2}\left( [a,c)\right) \right) &=&\mathcal{F}\left(
L^{2}\left( [a,b)\right) \oplus L^{2}\left( [b,c)\right) \right) \\
&=&\mathcal{F}\left( L^{2}\left( [a,b)\right) \right) \otimes \mathcal{F}%
\left( L^{2}\left( [b,c)\right) \right)
\end{eqnarray*}%
allowing it to accommodate the operator $\dprod_{[a,b)\times \lbrack
b,c)}{} $ $\left( 1+dr\right) .$

A necessary and sufficient condition that they consist of unitary operators
is \cite{hudson14} that 
\begin{equation*}
dr+dr^{\dagger }+drdr^{\dagger }=0. 
\end{equation*}%
Here the space $\mathcal{I=}\mathbb{C}\left\langle dP,dQ,dT\right\rangle $
is equipped with the multiplication given by the quantum It\^{o} product rule%
\begin{equation*}
\begin{tabular}{c|c}
& $%
\begin{array}{ll}
\begin{array}{ll}
dP\text{ \ } & dQ%
\end{array}
& dT%
\end{array}%
$ \\ \hline
$%
\begin{array}{l}
\begin{array}{l}
dP \\ 
dQ%
\end{array}
\\ 
dT%
\end{array}%
$ & $%
\begin{array}{ll}
\begin{array}{ll}
dT & -idT \\ 
idT & \quad dT%
\end{array}
& 
\begin{array}{l}
0 \\ 
0%
\end{array}
\\ 
\begin{array}{ll}
\quad 0 & \quad \quad 0%
\end{array}
& \text{ }0%
\end{array}%
$%
\end{tabular}%
\text{ } 
\end{equation*}%
and $\mathcal{I\otimes I}$ with the corresponding tensor product
multiplication, together with the natural involution $\dagger $ derived from
the self-adjointness of $P,$ $Q$ and $T.$

Two examples of such unitary generators are%
\begin{eqnarray*}
dr_{1} &=&i\left( dP\otimes dQ-dQ\otimes dP\right) , \\
dr_{2} &=&i\left( dP\otimes dP+dQ\otimes dQ\right) .
\end{eqnarray*}%
$dr_{1}$ relates to quantum L\'{e}vy area in which the independent
one-dimensional component Brownian motions of planar Brownian motion are
replaced by $P$ and $Q.$ In the same spirit, $dr_{2}$ relates to a quantum
version of the Bessel process, the radial part of planar Brownian motion.
The general form of unitary generator in which the time differential $dT$
does not appear is \cite{hudson-pei15}\ the real linear combination%
\begin{equation*}
  dr_{\lambda ,\mu }={\lambda \over 2} dr_{1}+{\mu \over 2} dr_{2}. 
\end{equation*}

In this paper we begin the explicit construction of the unitary causal double product integral 
\begin{align*}
  E := \prod_{<_{[a,b)}} (1 + dr_{\lambda, \mu})
\end{align*}
as the second quantisation $\Gamma \left( W\right) $ of a unitary
operator $W$ which differs from the identity operator $I$ by an integral
operator on the Hilbert space $L^{2}\left( [a,b)\right) $ whose kernel will
be found explicitly.

\textit{Acknowledgements: }Parts of this work were completed when the 
Robin Hudson visited the Mathematics Department of Chungbuk National University in
Korea, whose warm hospitality is gratefully acknowledged, along with
conversations with John Gough, Paul Jones and Janosch Ortmann.
Parts of this work were completed when Yuchen Pei visited the School of Mathematics of Trinity College Dublin, Mathematical Research and Conference Center of the Institute of Mathematics of the Polish Academy of Sciences, Chungbuk National University and Gyeongbokgung. Conversations with Neil O'Connell is also acknowledged. The research of Yuchen Pei is supported by EPSRC grant number EP/H023364/1.

\section{The L\'evy stochastic area}
Before moving on to construct $E$, let us take a detour and explain the motivation of this problem.

The stochastic L\'evy area introduced in \cite{levy51} is defined as the signed area formed by connecting the endpoints of a 2-dimensional Brownian path.
More specifically, it is defined as
\begin{align*}
  L = {1 \over 2}\int_{0 \le s_1 < s_2 < t} d B^1_{s_1} d B^2_{s_2} - d B^2_{s_1} d B^1_{s_2} d s_1 d s_2,
\end{align*}
where $B^1$ and $B^2$ are two independent Brownian motions.
The L\'evy area formula shows the characteristic function of $L$:
\begin{align}
  \expe\prod_{0 \le s_1 < s_2 < t} \left( 1 + {i \lambda \over 2} (dB^1_{s_1} dB^2_{s_2} - dB^2_{s_1} dB^2_{s_2}) \right) = \expe e^{i \lambda L} = \text{sech} {\lambda t \over 2}.
  \label{eq:laformula}
\end{align}

The L\'evy area formula has many interesting connotations. For example there are connections to integrable systems, Bernoulli and Euler polynomials, and hence to the values of the Riemann zeta function \cite{yor80}.
For some recent work and further references see \cite{ikeda-taniguchi10,ikeda-taniguchi11}.
Also, to within normalisation and rescaling it is equal to its Fourier transform, the density of the corresponding probability distribution, which is a boundary point of the Meixner family \cite{meixner34}.

Noncommutative analogues of L\'evy area have been previously considered in free probability~\cite{capitaine-donatimartin00,ortmann13,victoir04}.
Also in this connection Deya and Schott~\cite{deya-schott13} emphasise the primacy of iterated stochastic integrals which accords with our philosophy.
But in this paper we are concerned with a noncommutative analogue of a more conservative kind which arises in quantum stochastic calculus~\cite{hudson-parthasarathy84,parthasarathy92}, regarded as a noncommutative extension, rather than a radically noncommutative analogue, of It\=o calculus.
This allows a very natural variant of the area to be constructed using the minimal one-dimensional version of the calculus.
It may be regarded as a response to the call~\cite{applebaum10} for a study in this quantum context of some of the deeper properties of Brownian motion, as well as a furtherance of the theory of quantum stochastic product integrals~\cite{hudson07,hudson07a,hudson-ion81}.

By replacing $B^1$ and $B^2$ with $P$ and $Q$, the iterated quantum stochastic integral
\begin{align*}
  K(t) = {1 \over 2} \int_{0 \le x < y < t} (dP_x dQ_y - dQ_x dP_y)
\end{align*}
has some interesting properties~\cite{hudson13,chen-hudson13}. For example it is evidently invariant under gauge transformations, which replace $(P, Q)$ by $(P^\theta, Q^\theta)$  where
\begin{align*}
  P^\theta = P \cos \theta - Q \sin \theta,\qquad Q^\theta = P \sin \theta + Q \cos \theta;
\end{align*}
equivalently the corresponding creation and annihilation processes are multiplied by $e^{\pm i \theta}$.
In particular, taking $\theta = - {\pi \over 2}$ it is invariant under the replacement $(P, Q)$ by $(Q, - P)$.
Thus, unlike the separate processes $P$ and $Q$, it can be canonically ``rolled'' onto a (one-dimensional) Riemannian manifold, and its multidimensional version~\cite{friz-victoir10} can similarly be rolled onto a multidimensional manifold, with possible applications to quantum stochastic proofs of index theorems, by identifying the canonical Brownian motion on the manifold generated by the Laplacian as $P^\theta$ with arbitrarily chosen $\theta$.

It can also be verified~\cite{chen-hudson13} that all moments of $K(t)$ vanishes in the vacuum state, so that $K(t)$ vanishes in a probabilistic sense, even though it is not the zero operator.

But it is not $K$ which is the main object of study.
Because $\exp(a + b) \neq \exp a \exp b$ when $a$ and $b$ do not commute, the exponential
\begin{align*}
  \exp(i \lambda K(t)) = \exp\left( {i \lambda \over 2} \int_{0 \le x < y < t} (dP_x dQ_y - dQ_x dP_y) \right)
\end{align*}
does not reflect in a coherent way the continuous tensor product structure underlying the quantum stochastic calculus.
Thus, motivated by the hope of finding quantum extensions of, in particular, the L\'evy area formula \eqref{eq:laformula}, and associated relations with Euler and Bernoulli polynomials~\cite{ikeda-taniguchi11} we investigate the double product integral
\begin{align*}
  \prod_{<_{[a,b)}} (1 + d r_1) = \prod_{a \le x < y < b} \left( 1 + {i \lambda \over 2} (dP_x dQ_y - dQ_x dP_y) \right)
\end{align*}

However, as it turns out, the more general object
\begin{align*}
  E &= \prod_{<_{[a,b)}} (1 + d r_{\lambda, \mu}) \\
  &= \prod_{a \le x < y < b} \left( 1 + {i \lambda \over 2} (dP_x dQ_y - dQ_x dP_y) + {i \mu \over 2} (dP_x dP_y + dQ_x dQ_y) \right).
\end{align*}
is more fundamental and, surprisingly, simpler to study.

\section{A discrete double product of unitary matrices}

The first stage of the construction of $E$ is similar to that of the rectangular
case construction outlined in \cite{hudson-pei15}, in that we approximate $%
\dprod_{<_{[a,b)}}{}\left( 1+dr_{\lambda ,\mu }\right) $ by a discrete
double product $\Pi _{1\leq j<k\leq N}\left( I+\delta _{N}^{j,k}r_{\lambda
,\mu }\right) ,$ where $\delta _{N}^{j,k}r_{\lambda ,\mu }$ is obtained from 
$dr_{\lambda ,\mu }$ by replacing each basic differential $dX\in \left\{
dP,dQ\right\} $ contributing to $dr_{\lambda ,\mu }\in \mathcal{I}\otimes 
\mathcal{I}$ in the first copy of $\mathcal{I}$ by the $j$-th increment $%
X_{x_{j}} -X_{x_{j-1}} $ and in the second copy of $%
\mathcal{I}$ by the $k$-th increment $X_{x_{k}} -X_{x_{k-1}}$
over the equipartition 
\begin{equation*}
\lbrack a,b)=\sqcup _{j=1}^{N}[x_{j-1},x_{j}),\text{ }x_{j}=a+\frac{j}{N}%
\left( b-a\right) =: a + j \Delta_N. 
\end{equation*}%
Thus, for example,%
\begin{eqnarray*}
  \delta _{N}^{j,k}r_{1} &=&{i \over 2}\left( \left( P_{ x_{j}} -P_{ x_{j-1}} \right) \otimes \left( Q_{ x_{k}} -Q_{x_{k-1}} \right) \right. \\
&&-\left. \left( Q_{x_{j}} -Q_{x_{j-1}} \right) \otimes \left( P_{ x_{k}} -P_{ x_{k-1}} \right) \right) .
\end{eqnarray*}%
Introducing the standard canonical pairs $\left( p_{j},q_{j}\right)
,j=1,2,...,N,$ given by%
\begin{equation*}
p_{j}=\sqrt{\frac{b-a}{N}}\left( P\left( x_{j}\right) -P\left(
x_{j-1}\right) \right) ,\text{ }q_{j}=\sqrt{\frac{b-a}{N}}\left( Q\left(
x_{j}\right) -Q\left( x_{j-1}\right) \right) , 
\end{equation*}%
which satisfy the canonical commutation relations%
\begin{equation}
\left[ p_{j},q_{k}\right] =-2i\delta _{j,k},\left[ p_{j},p_{k}\right] =\left[
q_{j},q_{k}\right] =0,\text{ }  \label{7}
\end{equation}%
we write%
\begin{equation*}
\delta _{N}^{j,k}r_{1}=i\frac{b-a}{2 N}\left( p_{j}q_{k}-q_{j}p_{k}\right) 
\end{equation*}%
and more generally%
\begin{equation*}
\delta _{N}^{j,k}r_{\lambda ,\mu }=i\frac{b-a}{2 N}\left( \lambda \left(
p_{j}q_{k}-q_{j}p_{k}\right) +\mu \left( p_{j}p_{k}+q_{j}q_{k}\right)
\right) . 
\end{equation*}%
Our approximation is thus%
\begin{equation}
  \begin{aligned}
\dprod_{<_{[a,b)}}{}\left( 1+dr_{\lambda ,\mu }\right) &\simeq &\Pi _{1\leq
j<k\leq N}\left( I+i\frac{b-a}{2 N}\left( \lambda \left(
p_{j}q_{k}-q_{j}p_{k}\right) +\mu \left( p_{j}p_{k}+q_{j}q_{k}\right)
\right) \right)  \\
&\simeq &\Pi _{1\leq j<k\leq N}\exp \left( i\frac{b-a}{2 N}\left( \lambda
\left( p_{j}q_{k}-q_{j}p_{k}\right) +\mu \left( p_{j}p_{k}+q_{j}q_{k}\right)
\right) \right)  
  \end{aligned}
  \label{6}
\end{equation}%
for large $N$.

Temporarily let us fix $j<k$ and write $\left( p,q\right) =\left(
p_{j},q_{j}\right) ,$ $\left( p^{\prime },q^{\prime }\right) =\left(
p_{k},q_{k}\right) $ so that%
\begin{equation}
\left[ p,q\right] =-2i,\text{ }\left[ p^{\prime },q^{\prime }\right] =-2i,%
\text{ }\left[ p,q^{\prime }\right] =\left[ q,p^{\prime }\right] =\left[
p,p^{\prime }\right] =\left[ q,q^{\prime }\right] =0.\text{ }  \label{2}
\end{equation}%
We recall \cite{parthasarathy92} that, for an arbitrary Hilbert space $\mathcal{H}$ and
vector $f\in \mathcal{H}$ the corresponding \emph{Weyl operator} $W\left(
f\right) $ is the unique unitary operator on $\mathcal{F}\left( \mathcal{H}%
\right) $ which acts on each exponential vector $e\left( g\right) ,$ $g\in 
\mathcal{H}$ as%
\begin{equation*}
W\left( f\right) e\left( g\right) =e^{-%
{\frac12}%
\left\Vert f\right\Vert^2 -\left\langle f,g\right\rangle }e(f+g).
\end{equation*}%
The Weyl operators satisfy the Weyl relation%
\begin{equation}
W\left( f\right) W\left( g\right) =e^{-i\text{Im}\left\langle
f,g\right\rangle }W\left( f+g\right) .  \label{4}
\end{equation}%
A convenient rigorous realisation of two canonical pairs satisfying the
commutation relations (\ref{2}) can be constructed in terms of the
one-parameter unitary groups of which they are the self-adjoint
infinitesimal generators, which are Weyl operators on the Fock space $%
\mathcal{F}\left( \mathbb{C}^{2}\right) $ over $\mathbb{C}^{2}.$ Regarding $%
\mathbb{C}^{2}$ as a space of column vectors, we take 
\begin{eqnarray*}
e^{ixp} &=&W\left( \left( x,0\right) ^{\tau }\right) ,e^{ixq}=W\left( \left(
-ix,0\right) ^{\tau }\right) , \\
e^{ixp^{\prime }} &=&W\left( \left( 0,x\right) ^{\tau }\right)
,e^{ixq^{\prime }}=W\left( \left( 0,-ix\right) ^{\tau }\right) .
\end{eqnarray*}%
for arbitrary $x\in \mathbb{R,}$ noting that these four families of Weyl
operators are indeed one-parameter unitary groups, and that the commutation
relations (\ref{2}) follow by parametric differentiation, for example from
the relations%
\begin{eqnarray*}
W\left( \left( x,0\right) ^{\tau }\right) W\left( \left( -iy,0\right) ^{\tau
}\right)  &=&e^{2ixy}W\left( \left( -iy,0\right) ^{\tau }\right) W\left(
\left( x,0\right) ^{\tau }\right) , \\
W\left( \left( 0,x\right) ^{\tau }\right) W\left( \left( 0,-iy\right) ^{\tau
}\right)  &=&e^{2ixy}W\left( \left( 0,-iy\right) ^{\tau }\right) W\left(
\left( 0,x\right) ^{\tau }\right) ,
\end{eqnarray*}%
all of which are consequences of (\ref{4}).

Theorem 1 below, which is proved in \cite{hudson-pei15}, gives a corresponding
rigorous explicit form of the self-adjoint operator 
\begin{equation*}
L\left( \lambda ,\mu \right) =\lambda \left( pq^{\prime }-qp^{\prime
}\right) +\mu \left( pp^{\prime }+qq^{\prime }\right) 
\end{equation*}%
in this realisation. Before stating it we recall \cite{parthasarathy92} that the \emph{%
second quantisation }of a unitary operator $U$ on a Hilbert space $\mathcal{H%
}$ is the unique unitary operator $\Gamma \left( U\right) $ on $\mathcal{F}%
\left( \mathcal{H}\right) $ which acts on the exponential vectors as 
\begin{equation*}
\Gamma \left( U\right) e\left( f\right) =e\left( Uf\right) .
\end{equation*}%
It is related to the Weyl operators by%
\begin{equation}
\Gamma \left( U\right) W\left( f\right) =W\left( Uf\right) \Gamma \left(
U\right)   \label{3}
\end{equation}%
for arbitrary $f\in \mathcal{H.~}$Second quantisation is multiplicative, in
the sense that%
\begin{equation}
\Gamma \left( U_{1}U_{2}\right) =\Gamma \left( U_{1}\right) \Gamma \left(
U_{2}\right)   \label{5}
\end{equation}%
for arbitrary unitary $U_{1},U_{2}.$

\begin{theorem}\label{t:hp}
$L\left( \lambda ,\mu \right) $ generates the one-parameter unitary group%
\begin{equation*}
e^{ixL\left( \lambda ,\mu \right) }=\Gamma \left( \left[ 
\begin{array}{ll}
\cos \left( 2x\left\vert \nu\right\vert \right)  & -e^{-i\phi }\sin \left(
2x\left\vert \nu\right\vert \right)  \\ 
e^{i\phi }\sin \left( 2x\left\vert \nu\right\vert \right)  & \cos \left(
2x\left\vert \nu\right\vert \right) 
\end{array}%
\right] \right) ,\text{ }x\in \mathbb{R}
\end{equation*}%
where $\nu =\lambda +i\mu =e^{i\phi }\left\vert \nu \right\vert $ and the
matrix operates on column vectors in $\mathbb{C}^{2}$ by multiplication on
the left.
\end{theorem}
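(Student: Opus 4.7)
The plan is to recognise the right-hand side as a strongly continuous one-parameter unitary group whose Stone generator agrees with $L(\lambda,\mu)$, and then to identify the two generators by matching their Heisenberg-type action on the canonical pairs $(p,q,p',q')$.

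First, I would show that the $2\times 2$ matrix $U_x$ inside the second quantisation is a one-parameter unitary group in $x$. Decompose $U_x = \cos(2x|\nu|)I + \sin(2x|\nu|)J$ with $J = \left[\begin{smallmatrix}0 & -e^{-i\phi}\\ e^{i\phi} & 0\end{smallmatrix}\right]$, and observe that $J^* = -J$ and $J^2 = -I$, whence $U_x = \exp(2x|\nu|J)$ is unitary and $U_xU_y = U_{x+y}$. Setting $V_x := \Gamma(U_x)$, the multiplicativity property (\ref{5}) together with the strong continuity of second quantisation imply that $V$ is a strongly continuous unitary one-parameter group on $\mathcal{F}(\mathbb{C}^2)$; by Stone's theorem $V_x = e^{ix\tilde L}$ for a unique self-adjoint operator $\tilde L$.

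Second, I would compare $\tilde L$ with $L(\lambda,\mu)$ via their infinitesimal actions on Weyl operators. The intertwining (\ref{3}) reads $V_xW(f)V_x^* = W(U_xf)$ for every $f\in\mathbb{C}^2$. Taking $f = te_1$, expanding $W(tU_xe_1)$ as an ordered product of the basic exponentials $e^{i\cdot p}, e^{i\cdot q}, e^{i\cdot p'}, e^{i\cdot q'}$ via the Weyl relation (\ref{4}) (collecting a BCH-style phase), and differentiating first in $t$ and then in $x$ at $0$, yields $[\tilde L,p] = 2i(\mu q' - \lambda p')$, the coefficients coming directly from $\dot U_0 e_1 = 2(\lambda+i\mu)e_2$. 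The analogous computations with $f = -ite_1, te_2, -ite_2$ produce the remaining three commutators of $\tilde L$ with $q, p', q'$ as explicit real linear combinations of the canonical pairs. A direct check using only the commutation relations (\ref{2}) shows that $L(\lambda,\mu) = \lambda(pq'-qp') + \mu(pp'+qq')$ satisfies exactly the same four commutator identities.

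Third, I would close the argument by irreducibility plus a vacuum calculation. Since all four commutators of $\tilde L - L(\lambda,\mu)$ with $p,q,p',q'$ vanish and the CCR representation on $\mathcal{F}(\mathbb{C}^2)$ is irreducible (equivalently, the Weyl operators $\{W(f):f\in\mathbb{C}^2\}$ act irreducibly \cite{parthasarathy92}), the difference $\tilde L - L(\lambda,\mu)$ must be a real scalar. To identify it I would act on the vacuum: from $\Gamma(U_x)e(0) = e(U_x\cdot 0) = e(0)$ one reads off $\tilde L\,e(0) = 0$, while normal-ordering $L(\lambda,\mu)$ in terms of the creation and annihilation operators of the two modes exhibits every summand as annihilating $e(0)$, so $L(\lambda,\mu)e(0) = 0$ as well. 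Hence the scalar is $0$ and $\tilde L = L(\lambda,\mu)$. The chief obstacle is the rigorous handling of unbounded operators: all the commutator manipulations must be carried out on a common dense invariant core, for which the linear span of exponential vectors is the natural choice, since it is preserved by the Weyl operators and by all polynomials in the canonical pairs and allows the first-order differentiations to be justified by analytic vector arguments. A secondary nuisance is translating between the complex-linear $\mathbb{C}^2$-picture (where $U_x$ acts, with $\nu = \lambda + i\mu$ encoding the rotation plane) and the real-bilinear expressions in $p,q,p',q'$; the Weyl phases from (\ref{4}) must be tracked carefully, but they enter only at second order in $t$ and drop out of the first-order commutator identification.
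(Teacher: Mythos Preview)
The paper does not actually prove Theorem~\ref{t:hp}; immediately before the statement it says the result ``is proved in \cite{hudson-pei15}'' and then uses it as a quoted fact, so there is no in-paper argument to compare against.

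Your strategy is a standard and sound route to results of this shape: exhibit $x\mapsto\Gamma(U_x)$ as a strongly continuous unitary group, read off the Heisenberg action of its Stone generator on $p,q,p',q'$ from the intertwining relation (\ref{3}), match it against the commutators of $L(\lambda,\mu)$ computed from (\ref{2}), and fix the residual scalar by irreducibility of the Weyl representation together with $\Gamma(U_x)e(0)=e(0)$ and $L(\lambda,\mu)e(0)=0$. The one step worth tightening is the inference ``equal commutators with $p,q,p',q'$ on the exponential domain $\Rightarrow$ difference is scalar'': commuting with unbounded generators on a core does not by itself give commuting with the Weyl unitaries. It is cleaner to integrate your differential identities back to the group level---your commutator relations say that $x\mapsto e^{ixL(\lambda,\mu)}W(f)e^{-ixL(\lambda,\mu)}$ satisfies the same first-order linear ODE in $x$ (within the Weyl family, up to phase) as $x\mapsto W(U_xf)$, with the same initial value---so that $e^{ixL(\lambda,\mu)}W(f)e^{-ixL(\lambda,\mu)}=W(U_xf)$; then $V_x^{-1}e^{ixL(\lambda,\mu)}$ commutes with every $W(f)$ and Schur's lemma for the irreducible Fock representation gives the scalar, which your vacuum computation pins to $1$.
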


We now use Theorem \ref{t:hp} to construct an explicit second quantisation of the
approximation (\ref{6}).

Let us first construct a different realisation of the canonical pairs $%
\left( p_{j},q_{j}\right) ,j=1,2,...,n,$ satisfying (\ref{7}) in the Fock
space $\mathcal{F}\left( \mathbb{C}^{n}\right) $ over $\mathbb{C}^{n},$ by
defining%
\begin{equation*}
e^{ixp_{j}}=W\left( x\varepsilon _{j}\right) ,e^{ixq_{j}}=W\left(
-ix\varepsilon _{j}\right) 
\end{equation*}%
where $\left( \varepsilon _{j}\right) _{j=1}^{n}$ is the standard
orthonormal basis of $\mathbb{C}^{n}$, $\varepsilon _{j}=\left( 0,...,%
\overset{\left( j\right) }{1},0,...,0\right) ^{\tau }.$ Correspondingly, in
view of Theorem \ref{t:hp}, each operator 
\begin{equation*}
\exp \left( i\frac{b-a}{2 N}\left( \lambda \left( p_{j}q_{k}-q_{j}p_{k}\right)
+\mu \left( p_{j}p_{k}+q_{j}q_{k}\right) \right) \right) 
\end{equation*}
is realised as the second quantisation $\Gamma(R^N_{j, k})$ where%
\begin{equation*}
  R^N_{j,k}:=
\left[ 
\begin{array}{cccccccc}
&  &  & (j) &  & (k) &  &  \\ 
& 1 & \cdots & 0 & \cdots & 0 & \cdots & 0 \\ 
& \vdots & \ddots & \vdots & \cdots & \vdots & \cdots & \vdots \\ 
(j) & 0 & \cdots & \cos \left( \frac{ \left( b-a\right) }{N} |\nu| \right)
& \cdots & -{\bar \nu \over |\nu|} \sin \left( \frac{ \left( b-a\right) }{N} |\nu| \right) & \cdots
& 0 \\ 
& \vdots & \cdots & \vdots & \ddots & \vdots & \cdots & \vdots \\ 
(k) & 0 & \cdots & {\bar \nu \over |\nu|} \sin \left( \frac{ \left( b-a\right) }{N} |\nu| \right)
& \cdots & \cos \left( \frac{ \left( b-a\right) }{N} |\nu| \right) & \cdots
& 0 \\ 
& \vdots & \cdots & \vdots & \cdots & \vdots & \ddots & \cdots \\ 
& 0 & \cdots & 0 & \cdots & 0 & \cdots & 1%
\end{array}%
\right] .
\end{equation*}
In view of the multiplicativity property (\ref{5}) the discrete double
product (\ref{6}) is correspondingly realised as the second quantisation of
the product%
\begin{align}
  \prod_{1 \le j < k \le N} R^N_{j, k}.
  \label{8}
\end{align}
We now embed the matrix (\ref{8}) as a unitary operator $\W_{N}$ on $%
L^{2}\left( [a,b[\right) $ by mapping the standard basis of $%
\mathbb{C}^{N}$ to the orthonormal family $\left( \chi _{1},\chi
_{2},...,\chi _{N}\right) $ of normalized indicator functions%
\begin{equation*}
  \chi _{j}\left( x\right) = \sqrt{ N \over b - a} \ind_{[x_{j - 1}, x_j)}.
\end{equation*}%
By definition $\W_{N}$ acts as the identity operator $I$ on $\left( \chi
_{1},\chi _{2},...,\chi _{N}\right) ^{\perp }.$

Our objective in the remainder of this paper is to find an explicit form for
the (weak) limit 
\begin{equation*}
W=\underset{}{\lim_{N\rightarrow \infty }}\W_{N} 
\end{equation*}%
and to prove that $R$ is unitarity. The corresponding problems for
rectangular unitary product integrals was solved in outline in \cite{hudson-pei15}.
The causal case considered here is considerably more difficult, because the
method of iterated limits which reduces the rectangular case to a double
application of the time-orthogonal unitary dilation of \cite{hudson-ion-parthasarathy82}, is not
applicable. Instead a combinatorial argument based on a lattice path model
is used. For a similar alternative approach, avoiding the iterated limit
technique, to the rectangular product in the particular case of the
generator $dr_{1}$ corresponding to the quantum L\'{e}vy area, see \cite%
{hudson-jones12}; however the combinatorics for the rectangular case is much simpler
than here and it has no direct relation to L\'{e}vy area.

\section{A lattice path model and linear extensions of partial orderings}
So we want to calculate the limit of the triangular double product of $N\times N$ matrices
\begin{equation}
  \W_{N}=\prod_{1\leq j<k\leq N} R^N_{j,k}.
\label{W}
\end{equation}%
Here, for elements $x_{j,k}$ of an associative algebra having the
property that $x_{j,k}$ commutes with $x_{j^{\prime }k^{\prime }}$ whenever
both $j\neq j^{\prime }$ and $k\neq k^{\prime }$ we define the ordered
double product $\prod_{1\leq j<k\leq N}x_{j,k}$ by any of the equivalent
prescriptions 
\begin{equation*}
\prod_{1\leq j<k\leq N}x_{j,k}=\prod_{j=1}^{N-1}\left[
\prod_{k=j+1}^{N}x_{j,k}\right] =\prod_{k=2}^{N}\left[
\prod_{j=1}^{k-1}x_{j,k}\right] =\prod_{r=1}^{\frac{1}{2}%
N(N-1)}x_{j_{r},k_{r}}
\end{equation*}%
where $\left( \left( j_{1},k_{1}\right) ,\left( j_{2},k_{2}\right)
,...,\left( j_{\frac{1}{2}N(N-1)},k_{\frac{1}{2}N(N-1)}\right) \right) $ is
any ordering of the $\frac{1}{2}N(N-1)$ pairs $(j,k),1\leq j<k\leq N$
\medskip which is \emph{allowed,} meaning that it has the property that %
\begin{equation}
  (j_{r},k_{r})\text{ precedes }( j_{s},k_{s})\text{ if both } j_{r}\leq j_{s} \text{ and }k_{r}\leq k_{s}.
  \label{eq:ncprodcondition}
\end{equation}

In constructing the limit as $N\rightarrow \infty $ we use the
small angle approximations for sine and cosine, so that 
\begin{align*}
\begin{pmatrix}
\cos {b - a \over N} |\nu| & - {\bar \nu \over |\nu|} \sin{b - a \over N} |\nu| \\
{\nu \over |\nu|} \sin{b - a \over N} |\nu| & \cos {b - a \over N} |\nu| \\
\end{pmatrix}
= I + {b - a \over N}
\begin{pmatrix}
0 & -\bar\nu \\ \nu & 0
\end{pmatrix}
+ O(N^{-2})
\end{align*}
hence
\begin{align*}
  R^N_{j,k} = I + {(b - a) \over N} (-\bar \nu \ket{\chi_j} \bra{\chi_k} + \nu \ket{\chi_k} \bra{\chi_j}) + O(N^{-2}).
\end{align*}
When there is no ambiguity, for any integers $j$ and $k$, we use abbreviations $|j\rangle := | \chi_j \rangle$ and $\langle k | := \langle \chi_k |$.
Then the product (\ref{W}) becomes%
\begin{align*}
\W_{N} \simeq \prod_{1\leq j<k\leq N}\left( I+\frac{\left( b-a\right) }{N} Z^{N}(j,k)\right) =: W_N
\end{align*}%
where 
\begin{equation*}
Z^{N}(j,k)=-\bar \nu \left\vert j\right\rangle \left\langle k\right\vert + \nu \left\vert k\right\rangle \left\langle j\right\vert.
\end{equation*}

To compute this, we introduce and work on a lattice path model.
Consider a lattice $L_s := \{(m, n): 1 \le m \le s, 0 \le n \le 1\}$. 
We call $(m, 1)_{1 \le m \le s}$ the upper vertices, and $(m, 0)_{1 \le m \le s}$ the lower vertices.
Denote by $\Pi_s$ the set of lattice path $\pi = (m_i, b_i)_{i = 1}^s$ satisfying the following two conditions:
    \begin{enumerate}
      \item $m_i = i$ for $i = 1, \dots, s$
      \item there does not exist an $i$ such that $b_i = b_{i+1} = 0$
    \end{enumerate}
For convenience, we write $\pi(i) = b_i$ and let $\pi = (\pi(i))_i$.
We call any $\pi \in \Pi_s$ a path of length $s - 1$.
It is straightforward to verify by induction that
\begin{align*}
  |\Pi_s| = \text{Fib}_{s + 2} = {\Phi^{s + 2} - (- \Phi)^{- s - 2} \over \sqrt 5},
\end{align*}
where $\text{Fib}_n$ is the $n$th Fibonacci number and $\Phi$ is the golden ratio ${\sqrt 5 + 1 \over 2}$.

    If we assign weight $\theta(v)$ to each vertex $v$ in $L_s$, then we can define the weight $\theta(\pi)$ of a path $\pi \in \Pi_s$ by the product of the weights of its vertices:
    \begin{align*}
      \theta(\pi) := \prod_{i=1}^s \theta(i, \pi(i)).
    \end{align*}

  For any $s$-array of pairs $\{p_{ij}: 1 \le i \le s, 1 \le j \le 2\}$, define its associated weight $\theta_p(v)$ for any $v = (m, b) \in A$ to be 
  \begin{equation*}
    \theta_p (v) = 
    \begin{cases}
      \nu \ket{p_{i2}} \bra{p_{i1}},&\text{ if } b = 0;\\
      -\bar \nu \ket{p_{i1}}\bra{p_{i2}}, &\text{ if } b = 1.
    \end{cases}
  \end{equation*}
  Finally, define the weight $\theta_p(\pi)$ of a path in the same way as before.

For example, if we label the vertices by their weights associated to $p$, then the following is a path of $\Pi_5$:
\begin{center}
  \begin{tikzpicture}[xscale=2.5,yscale=1]
    \path (.8,1) node (p12) {$+\nu\ket{p_{12}}\bra{p_{11}}$};
    \path (.8,2) node (p11) {$-\bar\nu\ket{p_{11}}\bra{p_{12}}$};
    \path (2,1) node (p22) {$+\nu\ket{p_{22}}\bra{p_{21}}$};
    \path (2,2) node (p21) {$-\bar\nu\ket{p_{21}}\bra{p_{22}}$};
    \path (3,1) node (p32) {$+\nu\ket{p_{32}}\bra{p_{31}}$};
    \path (3,2) node (p31) {$-\bar\nu\ket{p_{31}}\bra{p_{32}}$};
    \path (4.2,1) node (ps-12) {$+\nu\ket{p_{4,2}}\bra{p_{4,1}}$};
    \path (4.2,2) node (ps-11) {$-\bar\nu\ket{p_{4,1}}\bra{p_{4,2}}$};
    \path (5.4,1) node (ps2) {$+\nu\ket{p_{5,2}}\bra{p_{5,1}}$};
    \path (5.4,2) node (ps1) {$-\bar\nu\ket{p_{5,1}}\bra{p_{5,2}}$};
    \draw (p11)--(p21)--(p32)--(ps-11)--(ps1);
  \end{tikzpicture}
\end{center}
but not the following because the third edge connects two bottom vertices:
\begin{center}
  \begin{tikzpicture}[xscale=2.5,yscale=1]
    \path (.8,1) node (p12) {$+\nu\ket{p_{12}}\bra{p_{11}}$};
    \path (.8,2) node (p11) {$-\bar\nu\ket{p_{11}}\bra{p_{12}}$};
    \path (2,1) node (p22) {$+\nu\ket{p_{22}}\bra{p_{21}}$};
    \path (2,2) node (p21) {$-\bar\nu\ket{p_{21}}\bra{p_{22}}$};
    \path (3,1) node (p32) {$+\nu\ket{p_{32}}\bra{p_{31}}$};
    \path (3,2) node (p31) {$-\bar\nu\ket{p_{31}}\bra{p_{32}}$};
    \path (4.2,1) node (ps-12) {$+\nu\ket{p_{4,2}}\bra{p_{4,1}}$};
    \path (4.2,2) node (ps-11) {$-\bar\nu\ket{p_{4,1}}\bra{p_{4,2}}$};
    \path (5.4,1) node (ps2) {$+\nu\ket{p_{5,2}}\bra{p_{5,1}}$};
    \path (5.4,2) node (ps1) {$-\bar\nu\ket{p_{5,1}}\bra{p_{5,2}}$};
    \draw (p11)--(p21)--(p32)--(ps-12)--(ps1);
  \end{tikzpicture}
\end{center}

Any $s$-array of pairs $p = (p_{ij})_{1 \le i \le s, 1 \le j \le 2}$ satisfying the following condition
\begin{equation}
  \mathbb I_{p_{i, 1} = p_{i + 1, 1}} + \mathbb I_{p_{i, 2} = p_{i + 1, 2}} + \mathbb I_{p_{i, 2} = p_{i + 1, 1}} = 1,\qquad p_{i, 1} \neq p_{i + 1, 2}
  \label{eq:boolean}
\end{equation}
can be associated with a path $\pi_p \in \Pi_s$ in the following way:
\begin{equation*}
  (\pi(i), \pi(i + 1)) = 
  \begin{cases}
    (0, 1),&\text{ if }p_{i, 1} = p_{i + 1, 1}\\
    (1, 1),&\text{ if }p_{i, 2} = p_{i + 1, 1}\\
    (1, 0),&\text{ if }p_{i, 2} = p_{i + 1, 2}
  \end{cases}.
\end{equation*}
Note that this is equivalent to
\begin{align*}
  \theta_p(\pi_p) = \prod_{i = 1}^s Z^N(p_{i, 1}, p_{i, 2}).
\end{align*}
\begin{lemma}\label{l:path}
    \begin{align*}
      W_N = I + \sum_{s = 1}^{N(N - 1) / 2}\tilde{w}_{s,N},
    \end{align*}
    where
    \begin{align*}
      \tilde w_{s,N} = \left( {b - a \over N} \right)^s \sum_{(*)} \theta_p(\pi_p)
    \end{align*}
where the domain $(*)$ of the summation is
\begin{align*}
  &(1 \le p_{i, j} \le N, \forall 1 \le i \le s, 1 \le j \le 2)\text{ AND }\\
  &\qquad(p_{i, 1} < p_{i + 1, 1} < p_{i, 2} = p_{i + 1, 2} \text{ OR} \\
  &\qquad p_{i, 1} < p_{i, 2} = p_{i + 1, 1} < p_{i + 1, 2} \text{ OR} \\
  &\qquad p_{i, 1} = p_{i + 1, 1} < p_{i, 2} < p_{i + 1, 2}, \forall 1 \le i \le s - 1)
\end{align*}
\end{lemma}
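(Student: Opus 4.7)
The plan is to expand $W_N$ multilinearly and then decompose each $Z^N$ into its two rank-one summands, identifying the surviving products via bra-ket orthogonality with the $s$-arrays in $(*)$. Writing $\alpha = (b-a)/N$, the first expansion yields
\begin{equation*}
W_N = \sum_{S \subseteq \{(j,k):\,j<k\}} \alpha^{|S|}\, \prod_{(j,k)\in S} Z^N(j,k),
\end{equation*}
where for each $S$ the inner product is taken in any ordering compatible with \eqref{eq:ncprodcondition}; since $Z^N(j,k)$ and $Z^N(j',k')$ commute whenever $\{j,k\}\cap\{j',k'\}=\emptyset$, this is unambiguous. Fixing an allowed ordering $(q_1,\dots,q_s)$ of $S$ and writing $Z^N(q_r) = -\bar\nu\ket{q_{r,1}}\bra{q_{r,2}} + \nu\ket{q_{r,2}}\bra{q_{r,1}}$, expanding each factor by a choice $\pi(r)\in\{0,1\}$ reproduces precisely the lattice-path weight $\theta_q(\pi)$.

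The core computation is the telescoping identity $(\ket{a}\bra{b})(\ket{c}\bra{d}) = \braket{b|c}\,\ket{a}\bra{d}$: the $s$-fold product is non-zero iff the bra of the $i$-th factor equals the ket of the $(i{+}1)$-th for each $i$. Running through the four choices of $(\pi(i),\pi(i{+}1))$ together with the inequality $q_{i,1}<q_{i,2}$ and the ordering condition, I obtain: the choice $(0,0)$ forces $q_{i,1}=q_{i+1,2}$, which combined with $q_{i+1,1}<q_{i+1,2}=q_{i,1}<q_{i,2}$ would require $q_{i+1}$ to precede $q_i$ in any linear extension -- a contradiction; $(1,0)$ forces $q_{i,2}=q_{i+1,2}$ and then the ordering condition gives $q_{i,1}<q_{i+1,1}$, the first disjunct of $(*)$; $(1,1)$ forces $q_{i,2}=q_{i+1,1}$, the second disjunct; and $(0,1)$ forces $q_{i,1}=q_{i+1,1}$ and then $q_{i,2}<q_{i+1,2}$, the third disjunct. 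The non-vanishing $\pi$ is therefore uniquely determined by the transition types, and these are exactly the three disjuncts of $(*)$.

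It remains to verify the inverse direction: every $p$ in $(*)$ has strictly increasing $p_{i,1}+p_{i,2}$ along $i$ (a one-line check per case), so the pairs along $p$ are automatically distinct and the sequence $(p_1,\dots,p_s)$ is itself a linear extension of the underlying subset; the $\pi_p$ read off from the transitions then coincides with the unique non-vanishing $\pi$ identified above, so $\theta_p(\pi_p) = \prod_i Z^N(p_{i,1},p_{i,2})$. Assembling these contributions for $|S|=s$ produces $\tilde w_{s,N}$. The main technical hurdle is the simultaneous juggling of the three constraints -- bra-ket matching, the strict inequality $q_{i,1}<q_{i,2}$, and the linear-extension condition from \eqref{eq:ncprodcondition} -- and checking that the $\pi(i)$-values implied by the two adjacent transitions $(i{-}1,i)$ and $(i,i{+}1)$ are always consistent, so that $\pi_p\in\Pi_s$ (equivalently, no two consecutive $0$'s can appear).
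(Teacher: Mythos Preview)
Your proof is correct and follows essentially the same route as the paper: both expand the product multilinearly, then isolate the surviving terms via the rank-one structure of the $Z^N(j,k)$'s together with the ordering constraint \eqref{eq:ncprodcondition}, arriving at the same three-case trichotomy and ruling out the fourth case $q_{i,1}=q_{i+1,2}$. Your version is organized slightly differently---you first split each $Z^N$ into its two rank-one summands indexed by $\pi(r)$ and then match bras to kets, whereas the paper analyzes when the product $Z^N(p_i)Z^N(p_{i+1})$ is nonzero and afterwards reads off the path edge---and you make the converse direction (that every $p$ in $(*)$ yields a valid linear extension, via monotonicity of $p_{i,1}+p_{i,2}$) explicit, which the paper leaves implicit; but the substance is the same.
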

\begin{proof}
  For any rearrangement $(j_i, k_i)_{1 \le i \le {N (N - 1) \over 2}}$ of $\{1 \le j < k \le N\}$ satisfying \eqref{eq:ncprodcondition},
  \begin{align*}
    \prod_{1 \le j < k \le N} (I + Z(j, k)) = \prod_{i = 1}^{N (N - 1) / 2}(I + Z(j_i, k_i)) = I + \sum_{s = 1}^{N (N - 1) / 2} \sum_{(**)} \prod_{r = 1}^s Z(p_{r1}, p_{r2}),
  \end{align*}
  where domain $(**)$ is
  \begin{align*}
    (p_{11}, p_{12}), &(p_{21}, p_{22}), \dots, (p_{s1}, p_{s2}) \\
    &\text{ is a subsequence of } (j_1, k_1), (j_2, k_2), \dots, (j_{N (N - 1) / 2}, k_{N (N - 1) / 2}).
  \end{align*}
  Now for the product $\prod_{r = 1}^s Z(p_{r1}, p_{r2})$ to be nonzero, the product of $Z(p_{i,1}, p_{i,2}) \times Z(p_{i + 1, 1}, p_{i + 1, 2})$ for each $i$ has to be nonzero, that is
  \begin{align*}
    (- \ket{p_{i,1}} \bra{p_{i, 2}} + \ket{p_{i, 2}} \bra{p_{i, 1}})
    (- \ket{p_{i + 1,1}} \bra{p_{i + 1, 2}} + \ket{p_{i + 1, 2}} \bra{p_{i + 1, 1}}) \neq 0.
  \end{align*}
  This in turn is equivalent to
  \begin{align*}
    (p_{i, 2} = p_{i + 1, 2}) \vee (p_{i, 2} = p_{i + 1, 1}) \vee (p_{i, 1} = p_{i + 1, 1}) \vee (p_{i, 1} = p_{i + 1, 2}).
  \end{align*}
We analyse these four possibilities one by one.
\begin{enumerate}
  \item If $p_{i, 2} = p_{i + 1, 2}$, then by \eqref{eq:ncprodcondition}, and since $(p_{i, 1}, p_{i, 2}) \neq (p_{i + 1, 1}, p_{i + 1, 2})$, only when $p_{i, 1} < p_{i + 2, 1}$ can the product be nonzero. In this case the coordinates are ordered as $p_{i, 1} < p_{i + 1, 1} < p_{i, 2} = p_{i + 1, 2}$.
  \item If $p_{i, 2} = p_{i + 1, 1}$, then since $p_{i, 1} < p_{i, 2} = p_{i + 1, 1}$ and $p_{i, 2} = p_{i + 1, 1} < p_{i + 1, 2}$, we have that \eqref{eq:ncprodcondition} is satisfied. Therefore this case is also included / permitted in the product. The ordering of the coordinates is $p_{i, 1} < p_{i, 2} = p_{i + 1, 1} < p_{i + 1, 2}$.
  \item If $p_{i, 1} = p_{i + 1, 1}$, then similar to Case 1, the coordinates have to satisfy $p_{i, 1} = p_{i + 1, 1} < p_{i, 2} < p_{i + 1, 2}$ for the product to be nonzero.
  \item If $p_{i, 1} = p_{i + 1, 2}$, then $p_{i, 1} = p_{i + 1, 2} > p_{i, 1}$ and $p_{i, 2} > p_{i, 1} = p_{i + 1, 2}$ violates \eqref{eq:ncprodcondition}, hence this case never happens.
\end{enumerate}
The three feasible cases are illustrated as below.
\tikzstyle{every node}=[circle, draw, fill=black!50, inner sep=0pt, minimum width=4pt]
\begin{equation}
\begin{aligned}
  i\;\;\;\;i+1&&i\;\;\;\;i+1&&i\;\;\;\;i+1\\
\begin{tikzpicture}
  \draw (1,1)node{}--(2,0)node{};
\end{tikzpicture}&&
\begin{tikzpicture}
  \draw (1,1)node{}--(2,1)node{};
  \draw[white] (0,0)--(1,0);
\end{tikzpicture}&&
\begin{tikzpicture}
  \draw (1,0)node{}--(2,1)node{};
\end{tikzpicture}\\
\tikzstyle{every node}=[]
\begin{tikzpicture}
  \node(11) at (1,1) {$p_{i1}$};
  \node(10) at (1,0) {$p_{i2}$};
  \node(21) at (2,1) {$p_{i+1,1}$};
  \node(20) at (2,0) {$p_{i+1,2}$};
  \draw[white](11)--node[black]{$<$}(21);
  \draw[white](11)--node[black,sloped]{$<$}(10);
  \draw[white](21)--node[black,sloped]{$<$}(20);
  \draw[white](10)--node[black]{$=$}(20);
\end{tikzpicture}&&
\tikzstyle{every node}=[]
\begin{tikzpicture}
  \node(11) at (1,1) {$p_{i1}$};
  \node(10) at (1,0) {$p_{i2}$};
  \node(21) at (2,1) {$p_{i+1,1}$};
  \node(20) at (2,0) {$p_{i+1,2}$};
  \draw[white](11)--node[black]{$<$}(21);
  \draw[white](11)--node[black,sloped]{$<$}(10);
  \draw[white](21)--node[black,sloped]{$<$}(20);
  \draw[white](10)--node[black,sloped]{$=$}(21);
  \draw[white](10)--node[black]{$<$}(20);
\end{tikzpicture}&&
\tikzstyle{every node}=[]
\begin{tikzpicture}
  \node(11) at (1,1) {$p_{i1}$};
  \node(10) at (1,0) {$p_{i2}$};
  \node(21) at (2,1) {$p_{i+1,1}$};
  \node(20) at (2,0) {$p_{i+1,2}$};
  \draw[white](11)--node[black]{$=$}(21);
  \draw[white](11)--node[black,sloped]{$<$}(10);
  \draw[white](21)--node[black,sloped]{$<$}(20);
  \draw[white](10)--node[black]{$<$}(20);
\end{tikzpicture}
\end{aligned}
\label{eq:threecases}
\end{equation}
The concatenation of these edges gives a path in $\Pi_s$.
Case 4 corresponds to a horizontal bottom edge in the path which is not allowed in the definition of $\Pi_s$. Therefore we have established a correspondence between the possibilities of orderings in the product and $\Pi_s$. 
\end{proof}

Denote by $A_s^*$ the set of $s$-array pairs $p$ satisfying condition $(*)$, and $\Omega_\pi := \{p \in A_s^*: \pi_p = \pi\}$. Then
\begin{align*}
  \tilde{w}_{s,N} = \left( {b - a \over N} \right)^s \sum_{\pi \in \Pi_s} \sum_{p \in \Omega_\pi} \theta_p(\pi).
\end{align*}

Given a path $\pi \in \Pi_s$, by the correspondence in \eqref{eq:threecases} there exist $m_1$, $m_2$, $\dots$, $m_{s - 1} \in \{1, 2\}$ such that for any $p \in \Omega_\pi$, $p_{x, m_x} = p_{x + 1, m_{x + 1}'}$, where $m_x' := 3 - m_x$. 
Therefore, $\Omega_\pi$ is characterised by a partial ordering on the $s + 1$ coordinates $p_{1, m_1'}, p_{1, m_1}, p_{2, m_2}, \dots, p_{s - 1, m_{s - 1}}, p_{s, m_{s - 1}'}$.
We call them the {\em essential coordinates} of $p$.
This also shows we can associate $\pi$ with $(m_1, m_2, \dots, m_{s - 1})$.
In the following we do not differentiate between $\pi$ and the corresponding partial ordering.

Any ordering $\pi \in \Pi_s$ can be decomposed into (strict) total orderings without any repetition of the essential coordinates and those with repeated essential coordinates.
We call any the former orderings $B$ a {\em linear extension} of $\pi$ which is denoted by $B \vdash \pi$, and the latter degenerate orderings, which, for reasons that will emerge in the proof of Lemma \ref{l:h} are ignored.
Thus we have 
\begin{align*}
  \Omega_\pi = \bigsqcup_{B \vdash \pi} B \cup \text{ set of degenerate orderings}.
\end{align*}

For any $p \in \Omega_\pi$, there exists a $B \vdash \pi$ such that $p \in B$.
Denote $(j_p, k_p) = (p_{1, m_1'}, p_{s, m_{s - 1}'})$.
In the total ordering imposed by $B$, let $r_B$ be the number of essential coordinates less than $j_p$ and $r_B'$ the number of those greater than $k_p$.
That is, the essential coordinates are ordered as follows,
\begin{align*}
    1 < l_1 < \dots < l_{r_B} < j_p < m_1 < \dots < m_{s - 1 - r_B - r_B'} &< k_p < n_1 < \dots < n_{r_B'} \le N, \\
    & \text{ if } r_B + r_B' < s\\
    1 < l_1 < \dots < l_{s - r_B'} < k_p < m_1 < \dots < m_{r_B + r_B' - s - 1} &< j_p < n_1 < \dots < n_{s - r_B} \le N, \\
    & \text{ if } r_B + r_B' > s
\end{align*}
We call $(r_B, r_B')$ the {\em rank} of $B$.

Let $\epsilon(\pi)$ be the number of upper vertices of the path $\pi$.
Since horizontal edges between lower vertices are not allowed, there is at least one upper vertex between two consecutive lower vertices, hence
\begin{align*}
  2 \epsilon(\pi) \ge s - 1.
\end{align*}
The location (upper or lower) of the first vertex of $\pi$, the number of upper vertices $\epsilon(\pi)$ and the parity of the length of $\pi$ together determine the number of horizontal edges in $\pi$. The cases when $\epsilon(\pi) \approx {s - 1 \over 2}$ are ``saturated'', meaning there is no horizontal edge $\pi$.
This will be later specified and exploited in the proof of Lemma \ref{l:abcd}.

The weight of $\pi$ is
\begin{align*}
  \theta_p(\pi) = (- \bar \nu)^{\epsilon(\pi)}\nu^{s - \epsilon(\pi)} \ket{j_p} \bra{k_p}.
\end{align*}
So
\begin{align*}
  \tilde w_{s,N} &= \left( \lambda {b - a \over N} \right)^s \sum_{\pi \in \Pi_s} (- \bar \nu)^{\epsilon(\pi)} \nu^{s - \epsilon(\pi)} \sum_{p \in \Omega_\pi} \ket{j_p} \bra{k_p}\\
  &\simeq \left(\lambda {b - a \over N}\right)^s \sum_{\pi \in \Pi_s} (- \bar\nu)^{\epsilon(\pi)} \nu^{s - \epsilon(\pi)} \sum_{B \vdash \Omega_\pi} \sum_{p \in B} \ket{j_p}\bra{k_p}\\
  &= \lambda^s \sum_{\pi \in \Pi_s} (- \bar \nu)^{\epsilon(\pi)} \nu^{s - \epsilon(\pi)} \sum_{B \vdash \pi} H^N_s(r_B, r_B') + v_{s, N} =: w_{s, N} + v_{s, N},
\end{align*}
where $v_{s, N}$ is the contribution from the degenerate orderings, on which one can carry out the same calculation for $w_{s, N}$ below, and that
{\footnotesize
\begin{equation*}
  H^N_s(r, r') = \begin{cases}
    \left( {b - a \over N} \right)^s \sum_{1 < l_1 < \dots < l_r < j < m_1 < \dots < m_{s - 1 - r - r'} < k < n_1 < \dots < n_{r'} \le N} \ket{j} \bra{k}, & r + r' < s\\
    \left( {b - a \over N} \right)^s \sum_{1 < l_1 < \dots < l_{s - r'} < k < m_1 < \dots < m_{r + r' - s - 1} < j < n_1 < \dots < n_{s - r} \le N} \ket{j} \bra{k}, & r + r' > s
  \end{cases}.
\end{equation*}
}

For example, for the following path $\pi$ of length $2$,
\tikzstyle{every node}=[circle, draw, fill=black!50, inner sep=0pt, minimum width=4pt]
\begin{center}
  \begin{tikzpicture}
    \draw(1,0)node{}--(2,1)node{}--(3,0)node{};
    \tikzstyle{every node}=[]
    \begin{scope}[shift={(0,-2)}]
      \foreach \i in {1,...,3}
      {
      	\node(\i1) at (\i,1) {$p_{\i,1}$};
      	\node(\i2) at (\i,0) {$p_{\i,2}$};
	\draw[white](\i1)--node[black,sloped]{$<$}(\i2);
      }
      \draw[white](11)--node[black]{$=$}(21);
      \draw[white](21)--node[black]{$<$}(31);
      \draw[white](12)--node[black]{$<$}(22);
      \draw[white](22)--node[black]{$=$}(32);
    \end{scope}
  \end{tikzpicture}
\end{center}
The ordering of the essential coordinates imposed by $\pi$ is:
\begin{align*}
  (p_{1, 1} < p_{1, 2} < p_{2, 2}) \wedge (p_{2, 1} < p_{3, 1} < p_{2, 2}),
\end{align*}
and the non-repeated starting and ending coordinates are $j_p = p_{12}$ and $k_p = p_{31}$.
The total ordering decomposition of $\Omega_\pi$ is
\begin{align*}
  \Omega_\pi = \{p_{11} < p_{12} < p_{31} < p_{22}\} \sqcup \{p_{11} < p_{31} < p_{12} < p_{22}\} \sqcup \{p_{11} < p_{12} = p_{31} < p_{22}\}.
\end{align*}
The last term is a degenerate case as $p_{12}$ is repeated. 
There is only one upper vertex, hence this path contributes $- \nu |\nu|^2 (H^N_3(2, 3) + H^N_3(3, 2))$ to $w_{s,N}$.

Define the Volterra-type kernels $>_{a}^{b}(x,y) := 1_{a \le y < x < b}$ and $<_{a}^{b}(x,y) := >_a^b(y,x)$, and $[m,n,p](x, y) := {(x - a)^m \over m!}{(y - x)^n \over n!}{(b - y)^p \over p!}$ and $[m,n,p]^\dagger(x, y) := [m,n,p](y, x)$.
The asymptotics of $H^n_s$ can be written down explicitly.

\begin{lemma}\label{l:h}
  $H^N_s(r, r')$ converges weakly to an integral operator $H_s(r, r')$, with the integral kernel
  \begin{equation*}
    h_s(r, r') = \begin{cases}
      [r, s - 1 - r - r', r'] <_a^b,& r + r' < s \\
      [s - r', r + r' - s - , s - r]^\dagger >_a^b,& r + r' > s
      \end{cases}
  \end{equation*}.
\end{lemma}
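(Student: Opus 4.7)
The plan is to compute the matrix elements of $H^N_s(r, r')$ in the basis $\{\chi_j\}$ exactly as a product of binomial coefficients, pair against smooth test functions, and recognise the result as a two-dimensional Riemann sum. I focus on the case $r + r' < s$; the case $r + r' > s$ is analogous with the roles of $j$ and $k$ reversed. Writing $\Delta_N = (b - a)/N$, and fixing the exposed indices $j$ and $k$ with $j < k$, the number of ways to choose the interior indices $l_1 < \dots < l_r$, $m_1 < \dots < m_{s - 1 - r - r'}$, $n_1 < \dots < n_{r'}$ in the respective intervals $\{1, \dots, j - 1\}$, $\{j + 1, \dots, k - 1\}$, $\{k + 1, \dots, N\}$ is $\binom{j - 1}{r}\binom{k - j - 1}{s - 1 - r - r'}\binom{N - k}{r'}$, so
\begin{align*}
  H^N_s(r, r') = \Delta_N^s \sum_{1 \le j < k \le N} \binom{j - 1}{r} \binom{k - j - 1}{s - 1 - r - r'} \binom{N - k}{r'} \, \ket{j}\bra{k}.
\end{align*}

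For test functions $\phi, \psi \in L^2([a, b))$, use $\braket{\phi | \chi_j} = \sqrt{\Delta_N}\,\bar\phi(x_j) + o(\sqrt{\Delta_N})$ (uniformly for continuous $\phi$, extended by $L^2$-density) to obtain
\begin{align*}
  \braket{\phi | H^N_s(r, r') | \psi} \simeq \Delta_N^{s + 1} \sum_{j < k} \bar\phi(x_j) \psi(x_k) \binom{j - 1}{r} \binom{k - j - 1}{s - 1 - r - r'} \binom{N - k}{r'}.
\end{align*}
Now apply $\binom{n}{p} = n^p / p! + O(n^{p - 1})$ to each binomial, with $n$ of order $\Delta_N^{-1}(x_j - a)$, $\Delta_N^{-1}(x_k - x_j)$, and $\Delta_N^{-1}(b - x_k)$ respectively; the three binomials together produce $\Delta_N^{-(s - 1)}$, leaving $\Delta_N^2$ as the Riemann area element. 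As $N \to \infty$ the Riemann sum converges to
\begin{align*}
  \iint_{a \le x < y < b} \bar\phi(x) \, \frac{(x - a)^r}{r!} \frac{(y - x)^{s - 1 - r - r'}}{(s - 1 - r - r')!} \frac{(b - y)^{r'}}{r'!} \, \psi(y) \, dx \, dy,
\end{align*}
which identifies the weak limit as the integral operator with kernel $[r, s - 1 - r - r', r'] <_a^b$.

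Rigour follows from dominated convergence on $[a, b]^2$: the polynomial kernel is uniformly bounded, and the errors from the binomial asymptotics and the pointwise replacement of $\braket{\phi | \chi_j}$ are uniformly $O(1/N)$. The case $r + r' > s$ is handled identically, with $k < j$ and the binomial triple counting interior indices in $\{1, \dots, k - 1\}$, $\{k + 1, \dots, j - 1\}$, $\{j + 1, \dots, N\}$, giving the polynomial $[s - r', r + r' - s - 1, s - r]^\dagger$ supported on $>_a^b$. The main obstacle is bookkeeping of the powers of $\Delta_N$ --- the prefactor $\Delta_N^s$ in $H^N_s$, the two factors of $\sqrt{\Delta_N}$ from the basis normalisations of $\chi_j$ and $\chi_k$, and the $\Delta_N^{-(s - 1)}$ from the three binomial asymptotics --- so that they combine to precisely $\Delta_N^2$; once this combinatorial accounting is right, the identification of the limit is a standard Riemann-sum argument.
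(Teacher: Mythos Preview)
Your proposal is correct and follows essentially the same approach as the paper: count the interior indices to get the product of binomials $\binom{j-1}{r}\binom{k-j-1}{s-1-r-r'}\binom{N-k}{r'}$, then rescale so the limit is the stated polynomial kernel. The only presentational difference is that the paper rewrites each binomial \emph{exactly} as a falling factorial in the spatial variables, e.g.\ $\binom{j-1}{r} = \frac{1}{r!\,\Delta_N^r}\prod_{\alpha=0}^{r-1}(x_{j-1}-a-\alpha\Delta_N)$, so that the discrete kernel $h^N_s(x,y)$ is already a bounded step function converging pointwise a.e.\ to $[r,s-1-r-r',r']<_a^b$; you instead use the leading asymptotic $\binom{n}{p}\sim n^p/p!$ and frame the result as a two-variable Riemann sum for $\langle\phi,H^N_s\psi\rangle$, which is equivalent but leaves slightly more error bookkeeping implicit.
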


\begin{proof}
Suppose $r + r' < s$ (the case $r + r' > s$ can be done in the same way). Then
\begin{align*}
  H_s^N(r, r') &= \left( {b - a \over N} \right)^s \sum_{1 \le j < k \le N} \ket{j} \bra{k} \sum_{1\le l_1 < \dots < l_{r} < j < m_1 < \dots < m_{s - 1 - r' - r} < k < n_1 < \dots < n_{r'} \le N} 1\\
  &= \left( {b - a \over N} \right)^s \sum_{1 \le j < k \le N} \ket{j} \bra{k} {j - 1 \choose r} {k - j - 1 \choose s - 1 - r' - r} {N - k \choose r'}
\end{align*}
We denote $\Delta_N := {b - a \over N}$, then the kernel of $H_s^N(r, r')$ is
\begin{align*}
  h_s^N(x,& y) = \sum_{1 \le j < k \le N} \mathbb I_{A_j}(x) \mathbb I_{A_k}(y) {1 \over r! (s - 1 - r - r')! r'!} \\
  &\times \prod_{\alpha = 0}^{r - 1} (x_{j - 1} - a - \alpha \Delta_N) \prod_{\beta = 0}^{s - 2 - r - r'} (x_{k - 1} - x_j - \beta \Delta_N) \prod_{\gamma = 0}^{r' - 1} (b - x_k - \gamma \Delta_N).
\end{align*}
This, as $N \to \infty$, converges weakly (as an integral kernel) to $[r, s - 1 - r - r', r'] <_a^b(x, y)$.
\end{proof}

It can also be seen from the proof of this lemma that the degenerate orderings contribute $0$ to the total sum.
More specifically, the degenerate version of $h_s^N(x,y)$ where there are $d$ repeated essential coordinates is
\begin{align*}
  q_s^N&(x, y) = \sum_{1 \le j < k \le N} \mathbb I_{A_j}(x) \mathbb I_{A_k}(y) {1 \over r! (s - 1 - r - r')! r'!} {(b - a)^d \over N^d} \\
  &\times \prod_{\alpha = 0}^{r - 1} (x_{j - 1} - a - \alpha \Delta_N) \prod_{\beta = 0}^{s - d - 2 - r - r'} (x_{k - 1} - x_j - \beta \Delta_N) \prod_{\gamma = 0}^{r' - 1} (b - x_k - \gamma \Delta_N) \to 0
\end{align*}
as $N \to \infty$.
We will examine carefully the rate of convergence of this lemma and the (in)significance of the degenerate orderings later in the proof of Theorem \ref{t:conv}.
This lemma immediately gives the following corollary.

\begin{corollary}
  There exist two integer arrays $(D_{m,n,p;q})_{m, n, p\ge0, 0 \le q \le m + n + p + 1}$ and $(E_{m,n,p;q})_{m, n, p \ge0, 0 \le q \le m + n + p + 1}$ such that $w_{s, N}$ converges weakly as $N \to \infty$ to $w_s$ with kernel
  \begin{align*}
    f_s <_a^b + g_s >_a^b
  \end{align*}
  where $f_s$ and $g_s$ are defined by
  \begin{align*}
    f_s = \sum_{m, n, p\ge 0, m + n + p = s-1} \sum_{q = 0}^{s} D_{m, n, p; q} (-\bar \nu)^q \nu^{s - q} [m, n, p],\\
    g_s = \sum_{m, n, p\ge 0, m + n + p = s-1} \sum_{q = 0}^{s} E_{m, n, p; q} (-\bar \nu)^q \nu^{s - q} [m, n, p]^\dagger.
  \end{align*}
\end{corollary}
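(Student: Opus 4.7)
The plan is to deduce the corollary as a direct consequence of Lemma \ref{l:h}. Starting from the expression
\begin{align*}
  w_{s,N} = \sum_{\pi \in \Pi_s} (-\bar\nu)^{\epsilon(\pi)} \nu^{s - \epsilon(\pi)} \sum_{B \vdash \pi} H^N_s(r_B, r_B')
\end{align*}
already derived in the preceding discussion, I note that the outer sums range over the finite sets $\Pi_s$ and $\{B : B \vdash \pi\}$, whose cardinalities are independent of $N$. Termwise application of Lemma \ref{l:h} therefore yields the weak limit
\begin{align*}
  w_s = \sum_{\pi \in \Pi_s} (-\bar\nu)^{\epsilon(\pi)} \nu^{s - \epsilon(\pi)} \sum_{B \vdash \pi} H_s(r_B, r_B').
\end{align*}

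The next step is to split this sum according to the dichotomy $r_B + r_B' < s$ versus $r_B + r_B' > s$ appearing in Lemma \ref{l:h}. A non-degenerate linear extension $B$ totally orders the $s+1$ essential coordinates, so $r_B + r_B'$ counts those essentials lying strictly below $j_p$ or strictly above $k_p$. When $j_p < k_p$ these two sets are disjoint and together miss $j_p$, $k_p$ and everything between, giving $r_B + r_B' \le s - 1 < s$; when $j_p > k_p$ the two sets both contain $j_p$, $k_p$ and every essential strictly between them, giving $r_B + r_B' \ge s + 1 > s$. In particular $r_B + r_B' = s$ never occurs, and the two cases correspond respectively to the $<_a^b$ and $>_a^b$ contributions of Lemma \ref{l:h}.

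Finally I regroup by the exponent triple $(m, n, p)$ of the polynomial factor and by $q = \epsilon(\pi)$. In the $<_a^b$ case one reads off $(m,n,p) = (r_B,\, s - 1 - r_B - r_B',\, r_B')$, with $m + n + p = s - 1$; in the $>_a^b$ case $(m,n,p) = (s - r_B',\, r_B + r_B' - s - 1,\, s - r_B)$, again summing to $s - 1$. Setting
\begin{align*}
  D_{m,n,p;q} &= |\{(\pi, B) : B \vdash \pi \in \Pi_s,\; \epsilon(\pi) = q,\; r_B = m,\; r_B' = p,\; r_B + r_B' < s\}|, \\
  E_{m,n,p;q} &= |\{(\pi, B) : B \vdash \pi \in \Pi_s,\; \epsilon(\pi) = q,\; r_B = s - p,\; r_B' = s - m,\; r_B + r_B' > s\}|
\end{align*}
produces nonnegative integers, with the range $q \in [0, s] = [0, m+n+p+1]$ automatic since a path in $\Pi_s$ has exactly $s$ vertices of which $q$ are upper. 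Substitution into the expression for $w_s$ then yields the claimed forms of $f_s$ and $g_s$.

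The argument is essentially bookkeeping: the only non-routine step is the case analysis excluding $r_B + r_B' = s$, and the transfer of weak convergence from $H^N_s(r, r')$ to the outer sum is automatic because its index set depends only on $s$.
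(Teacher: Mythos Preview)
Your proof is correct and follows the same approach as the paper, which simply states that the corollary is immediate from Lemma \ref{l:h} and then records that $D_{m,n,p;q}$ (resp.\ $E_{m,n,p;q}$) enumerates the linear extensions of paths of length $m+n+p$ with $q$ upper vertices and rank $(m,p)$ (resp.\ $(m+n+1,\,n+p+1)$). Your version spells out the bookkeeping more carefully, including the observation that $r_B + r_B' = s$ cannot occur, which the paper leaves implicit; note that once you fix $r_B = m$, $r_B' = p$ with $m+n+p = s-1$, the inequality $r_B + r_B' < s$ is automatic, and similarly for the $E$ case, so those conditions in your set definitions are redundant but harmless.
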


Indeed, $D_{m, n, p; q}$ (resp. $E_{m, n, p; q}$) enumerates the linear extensions of all possible paths of length $m + n + p$ with $q$ upper vertices and rank $(m, p)$ (resp. $(m + n + 1, n + p + 1)$).

\begin{corollary}\label{c:symmetry}
  The functions $f_s$ and $g_s$ both are symmetric in the following sense:
  \begin{align*}
    f_s(x, y) = f_s(a + b - y, a + b - x),\qquad g_s(x, y) = g_s(a + b - y, a + b - x)
  \end{align*}
\end{corollary}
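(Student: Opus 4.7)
The plan is to reduce the two claimed kernel symmetries to a pair of combinatorial identities for the integer arrays $D$ and $E$, and to establish those identities via an involution on the (path, linear extension) data of Lemma~\ref{l:h}.

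First I would perform a direct calculation showing that under the affine involution $(x,y)\mapsto(a+b-y,a+b-x)$ we have $[m,n,p](a+b-y,a+b-x)=[p,n,m](x,y)$ and $[m,n,p]^\dagger(a+b-y,a+b-x)=[p,n,m]^\dagger(x,y)$, while both Volterra kernels $<_a^b$ and $>_a^b$ are fixed. Substituting into the formulas for $f_s$ and $g_s$ and relabelling $(m,n,p)\leftrightarrow(p,n,m)$ in each sum reduces the corollary to the identities
\begin{align*}
D_{m,n,p;q}=D_{p,n,m;q}, \qquad E_{m,n,p;q}=E_{p,n,m;q},
\end{align*}
which express that the enumerations underlying $D$ and $E$ are invariant under swapping the two components of the rank.

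Next I would define an involution $(\pi,B)\mapsto(\pi^r,B^r)$ on the (path, linear extension) pairs counted by $D$ and $E$: here $\pi^r$ reverses the vertex sequence of $\pi$, and $B^r$ reverses the total ordering $B$ of the essential coordinates. Clearly $\pi^r\in\Pi_s$ (the no-two-consecutive-lower-vertices condition is symmetric), $\epsilon(\pi^r)=\epsilon(\pi)$, and the map is an involution. The crucial step is to verify that path reversal produces the \emph{opposite} partial order on the essential coordinates. Writing out the three cases of \eqref{eq:threecases} as constraints on the relevant triple of consecutive essential coordinates $(e_i,e_{i+1},e_{i+2})$ gives
\begin{align*}
\text{Case 1:}\ e_i<e_{i+2}<e_{i+1}, \quad \text{Case 2:}\ e_i<e_{i+1}<e_{i+2}, \quad \text{Case 3:}\ e_{i+1}<e_i<e_{i+2};
\end{align*}
reversal of $\pi$ exchanges Cases~1 and~3, preserves Case~2, and relabels each triple in reverse order, and a mechanical check confirms that the resulting constraint at each edge of $\pi^r$ is exactly the reverse of the corresponding constraint of $\pi$. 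Hence $B^r$ is a linear extension of the partial order induced by $\pi^r$.

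Finally I would track the rank through the involution. Since $(j_p,k_p)=(e_1,e_{s+1})$ for $\pi$ becomes $(e_{s+1},e_1)$ for $\pi^r$, and the order reversal sends each position $i$ (from the smallest) to $s+2-i$, the rank $(r,r')$ of $B$ is mapped to rank $(r',r)$ of $B^r$, while the regime ($j_p<k_p$ in the $D$-case, $k_p<j_p$ in the $E$-case) is preserved. This yields bijections between $D$-pairs of rank $(m,p)$ and those of rank $(p,m)$, and between $E$-pairs of rank $(m+n+1,n+p+1)$ and those of rank $(p+n+1,n+m+1)$, giving the two combinatorial identities and hence the corollary. The main obstacle is the case-by-case verification that $\pi$-reversal yields the opposite partial order; once that is in hand, the rank bookkeeping and conclusion are routine.
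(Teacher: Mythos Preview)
Your proposal is correct and uses the same core idea as the paper: the involution given by reversing the path $\pi\mapsto\pi^r$, $(i,b_i)\mapsto(i,b_{s+1-i})$. The paper's proof is a single sentence invoking this ``weight-preserving bijection'' on $\Pi_s$, whereas you spell out explicitly the reduction to $D_{m,n,p;q}=D_{p,n,m;q}$ and $E_{m,n,p;q}=E_{p,n,m;q}$, the induced opposite partial order on the essential coordinates, and the rank swap $(r,r')\mapsto(r',r)$; these details are exactly what the paper's one-liner is implicitly relying on.
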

\begin{proof}
  This follows from the fact that the path inversion $(i, b_i) \mapsto (i, b_{s + 1 - i})$ is a weight-preserving bijection between $\Pi_s$ and itself.
\end{proof}

For example, some calculation yields
\begin{equation}
\begin{aligned}
f_1 &= - \bar \nu [0,0,0],\\
f_2 &= -|\nu|^2 [0,0,1] - |\nu|^2 [1,0,0] + {\bar\nu}^2[0,1,0],\\
f_3 &= (\bar\nu |\nu|^2 - {\bar\nu}^3)[0,2,0] + \bar\nu|\nu|^2[0,1,1] + \bar\nu|\nu|^2[1,1,0] - \nu |\nu|^2[1,0,1],\\
g_1 &= - \bar \nu [0, 0, 0]^\dagger,\\
g_2 &= 0,\\
g_3 &= - \nu |\nu|^2 [1, 0, 1]^\dagger.
\end{aligned}
  \label{eq:123}
\end{equation}

The following three theorems are the main results of this paper:

\begin{theorem}
  The closed form expression of $D$ and $E$ are:
      \begin{equation}
      D_{m, n, p; q} =
      \begin{cases}
        {n \choose q - 1} - {n \choose q}, & 2 q > m + n + p\\
        {n \choose q - m} - {n \choose q}, & 2 q = m + n + p\\
        0, & 2 q < m + n + p
      \end{cases}
      \label{eq:D1}
    \end{equation}
    \begin{align*}
      E_{m, n, p; q}&= \mathbb I_{m = p = q,n=0}
    \end{align*}
    \label{t:main}
\end{theorem}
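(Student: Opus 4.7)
The plan is to convert the linear-extension count into a count of permutations of $\{1, \ldots, s+1\}$ with a specific monotonicity structure, then translate this into a lattice-walk model on $\mathbb{Z}$, and finally extract $D$ and $E$ via a reflection-principle argument.

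Analyzing the three edge types displayed in \eqref{eq:threecases}, I would first establish that the partial order imposed on the $s+1$ essential coordinates $e_0, e_1, \ldots, e_s$ by any $\pi \in \Pi_s$ is generated by (A) the local relations $e_{i-1} < e_i \Leftrightarrow \pi(i) = 1$ and (B) the ``skip'' relations $e_{i-1} < e_{i+1}$ for all $i$ (the latter arising uniformly from all three permitted edge types since horizontal-bottom is excluded). Consequently a linear extension of $\pi$ corresponds bijectively to a permutation $\sigma$ of $\{1, \ldots, s+1\}$ (setting $\sigma(i)$ equal to the rank of $e_i$) whose restrictions to even-indexed and to odd-indexed positions are each increasing; in this bijection the upper-vertex count $q$ equals the number of ascents of $\sigma$, and the rank conditions translate to $\sigma(0) = m+1$, $\sigma(s) = m+n+2$ in the $D$ case, with the roles swapped in the $E$ case. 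The $E$ case is then immediate: $\sigma(0) > \sigma(s)$ combined with the monotonicity forces every even-indexed value to exceed every odd-indexed value, which requires $s$ odd and determines $\sigma$ uniquely as the ``split shuffle'' $\sigma(2k-2) = (s+1)/2 + k$, $\sigma(2k-1) = k$; reading off its parameters gives $m = p = q = (s-1)/2$ and $n = 0$, proving $E_{m,n,p;q} = \mathbb{I}_{m = p = q,\, n = 0}$.

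For the $D$ case I would encode $\sigma$ by its associated word $t_1 \cdots t_{s+1} \in \{E, O\}^{s+1}$, with $t_r = E$ iff rank $r$ occupies an even-indexed position of $\sigma$, and introduce the height walk $h(j) := \#\{\text{E in } t_1 \cdots t_j\} - \#\{\text{O in } t_1 \cdots t_j\}$. A direct case analysis of the four interleaving patterns shows that among the $s$ ``tracked'' steps of the walk (those corresponding to an adjacency of $\sigma$) a step is a descent iff it is an E at height $\leq -1$ or an O at height $\geq 2$, while the one ``untracked'' step (the last minority-type character) is always an ascent under the given endpoint constraints. These constraints rigidify the first $m+1$ and last $p+1$ steps of the walk, which between them contribute exactly $\mathbb{I}_{m \geq 1} + p$ descents; the middle $n$-step walk, running from height $-m + 1$ to a prescribed endpoint depending on $p$ and the parity of $s$, must therefore account for the remaining $s - q - p - \mathbb{I}_{m \geq 1}$ descents.

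The core combinatorial step is then to count such middle walks by a reflection-principle argument at the barriers $h = -1$ and $h = 2$ (equivalently, by Lindstr\"om--Gessel--Viennot on two non-crossing paths); the $m,p$-dependent terms in the resulting inclusion-exclusion should telescope to produce $\binom{n}{q-1} - \binom{n}{q}$ when $2q > m + n + p$ and $\binom{n}{q-m} - \binom{n}{q}$ when $2q = m + n + p$ (the third subcase $2q < m + n + p$ being vacuous, since the path axioms already force $2\epsilon(\pi) \geq s - 1$). The main obstacle I expect is cleanly verifying that the $(m, p)$-dependence collapses in the first subcase and that the boundary subcase (where the walk has no ``safe-to-safe'' steps and is a concatenation of excursions) gives the correct $m$-dependent formula; the kernel symmetry $D_{m,n,p;q} = D_{p,n,m;q}$ of Corollary \ref{c:symmetry}, together with the identity $\binom{n}{q-m} = \binom{n}{q-p}$ valid whenever $m + p = 2q - n$, will provide a useful consistency check throughout.
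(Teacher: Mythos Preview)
Your approach is genuinely different from the paper's, and your global setup is correct. The paper proceeds by induction on $s$: it first computes $D$ directly for the ``saturated'' paths $\wedge^k$, $\vee^k$, $\setminus\wedge^k$, $\wedge^k/$ via an explicit bijection with Dyck paths (Lemma~\ref{l:abcd}), then decomposes an arbitrary path according to its initial $\wedge^k$- or $\vee^k$-excursion to obtain a product formula $D^\pi_{m,n,p;q} = D_{m,2k+1-m,0;k+1}\,D^{\theta_{2k+2}\pi}_{0,m+n-2k-2,p;q-k-1}$, and finally sums over $k$ using the Catalan-type convolution identity of Lemma~\ref{l:recurrence}. Your route instead aggregates over all $\pi$ at once: the identification of the partial order with the relations (A) and (B) is correct (and worth stating explicitly, since it is the crux of the reduction), and the $E$ case falls out cleanly, exactly as you describe. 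The payoff of your method is a single direct count rather than an induction; the paper's method has the advantage that each step is a concrete manipulation of Catalan numbers.

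Where your proposal needs work is the walk analysis for $D$. Tracing through the two-pointer filling of even and odd positions, a descent at position $i$ (attributed to the step filling $i$) occurs precisely when that step is an $E$ at height $\ge 2$ or an $O$ at height $\le -1$; your stated condition has these two cases interchanged. Likewise the ``untracked'' step is the one filling position $0$ (the first $E$, at height $1$), not the last minority-type character. These are fixable bookkeeping slips, but they propagate into your claim that the first $m+1$ steps contribute exactly $\mathbb{I}_{m\ge 1}+p$ descents, which you should re-derive. More substantively, the reflection argument you allude to is not standard: you have a walk with a two-sided ``descent region'' (heights $\ge 2$ for one step type and $\le -1$ for the other), and you must count walks with a prescribed number of visits to that region. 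This is not a single-barrier reflection, and the telescoping you hope for is not automatic; it is exactly here that the paper's inductive argument earns its keep by reducing everything to a one-variable Catalan convolution. Your approach can likely be completed, but the middle-walk enumeration needs a careful case analysis (or a bijection with the paper's Dyck-path model) rather than a one-line appeal to reflection.
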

\begin{proof}
  See Section \ref{s:pfmain}.
\end{proof}

\begin{theorem}
  The operator $W_N$ converges weakly to
  \begin{align*}
    W = I + \sum_{s \ge 1} w_s
  \end{align*}
  \label{t:conv}
\end{theorem}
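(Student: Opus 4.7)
The plan is to combine termwise convergence for each fixed $s$ with a dominated convergence argument that interchanges the $N\to\infty$ limit with the outer sum. By Lemma \ref{l:path} one has $W_N = I + \sum_{s=1}^{N(N-1)/2}\tilde w_{s,N}$ with $\tilde w_{s,N} = w_{s,N} + v_{s,N}$. Lemma \ref{l:h} gives $H^N_s(r,r')\to H_s(r,r')$ weakly, and since $|\Pi_s|\le \mathrm{Fib}_{s+2}$ and each $\pi\in\Pi_s$ admits only finitely many linear extensions, $w_{s,N}$ is a finite linear combination of the $H^N_s(r,r')$ with $N$-independent coefficients; hence $w_{s,N}\to w_s$ weakly. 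The degenerate contribution $v_{s,N}\to 0$, since the calculation immediately following Lemma \ref{l:h} shows that each repeated essential coordinate contributes an extra factor of $(b-a)/N$, making the associated kernel $O(N^{-1})$.

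To exchange limit and outer sum I would derive a uniform-in-$N$ bound $\|\tilde w_{s,N}\|_{\mathrm{op}}\le c_s$ with $\sum_s c_s<\infty$. The natural route is to estimate the Hilbert--Schmidt norm of each $H^N_s(r,r')$ via the pointwise bound $[r,s-1-r-r',r']\le (b-a)^{s-1}/(r!\,(s-1-r-r')!\,r'!)$ times the domain volume $(b-a)^2$, weight each contribution by the $|\nu|^s$ coming from $\theta_p(\pi)$, and sum over $\pi\in\Pi_s$ and over the linear extensions $B\vdash\pi$. The identity $\sum_{m+n+p=s-1}\frac{1}{m!\,n!\,p!}=\frac{3^{s-1}}{(s-1)!}$ then absorbs the local combinatorial factors into $1/(s-1)!$, and the coarse bound $D_{m,n,p;q},\,E_{m,n,p;q}\le 2^n$ from Theorem \ref{t:main} (used provisionally, its proof being deferred to Section \ref{s:pfmain}) produces $c_s\le (K|\nu|(b-a))^s/(s-1)!$ for some absolute $K$, which is summable.

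The main obstacle is this uniform bound: both $|\Pi_s|$ and the number of linear extensions of a given path grow rapidly in $s$, so a crude count risks overwhelming the $((b-a)/N)^s$ prefactor before the combinatorial sums collapse into $1/(s-1)!$. Care is needed to track how the at most $\binom{N}{s+1}\le N^{s+1}/(s+1)!$ essential-coordinate tuples per linear extension interact with the $((b-a)/N)^s$ factor, and to combine this with the path-count bound into a single summable estimate. Once this is in place, weak convergence $\langle\phi, W_N\psi\rangle\to\langle\phi,W\psi\rangle$ against indicator or exponential test vectors follows from ordinary dominated convergence applied to the matrix elements.
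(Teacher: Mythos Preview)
Your strategy matches the paper's almost exactly: both routes reduce the problem to (i) termwise weak convergence $w_{s,N}\to w_s$ via Lemma~\ref{l:h}, (ii) a uniform-in-$N$, summable-in-$s$ bound coming from the explicit inequality $D_{m,n,p;q}\le 2^n$ supplied by Theorem~\ref{t:main} together with the multinomial identity $\sum_{m+n+p=s-1}\frac{1}{m!\,n!\,p!}=\frac{3^{s-1}}{(s-1)!}$, and (iii) separate control of the degenerate contributions $v_{s,N}$. The paper merely organises this as a direct three-part estimate rather than invoking dominated convergence, but the analytic content is identical.

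Two remarks. First, your ``obstacle'' paragraph is unnecessary hedging for $w_{s,N}$: once the combinatorics over paths and linear extensions has been repackaged into the coefficients $D_{m,n,p;q}$ (which is the whole point of the Corollary following Lemma~\ref{l:h}), the crude Fibonacci and linear-extension counts never enter; the single bound $D_{m,n,p;q}\le 2^n$ already delivers $\lVert w_{s,N}\rVert\le s\cdot (C(b-a)|\nu|)^{s}/(s-1)!$. The alternative raw count via $\binom{N}{s+1}$ that you float at the end would in fact leave an uncontrolled factor of $N$ and should be discarded.

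Second, your treatment of the degenerate terms is the one genuine soft spot. Saying each $v_{s,N}=O(N^{-1})$ for fixed $s$ is not enough for dominated convergence; you need the implied constant to be summable in $s$. Here there is no analogue of the closed formula for $D_{m,n,p;q}$, so one is forced back to crude counting: the paper bounds the number of paths by the Fibonacci number $\Phi^{s+2}$ and the number of degenerate total orderings with $d$ collisions by $\binom{\lfloor (s-1)/2\rfloor}{d}^2$, then uses the extra factor $((b-a)/N)^d$ from each collision to get $\sum_s v_{s,N}=O(N^{-1})$ with an $s$-summable majorant. This is precisely the ``care'' you allude to; filling it in completes your argument.
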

\begin{proof}
  See Section \ref{s:pfconv}.
\end{proof}

For $j \ge 0$, let $B_j$ be power series in two variables related to the Bessel functions of the first kind $J_j$.
\begin{align*}
B_j(x, y) := \sum_{n \ge 0}{(-1)^{n + j} x^{n + j} y^n \over (n + j)! n!} = (-1)^j (x / y)^{j / 2} J_j(2 \sqrt{x y}).
\end{align*}
Let $I$ be the identity, then the kernel of the operator $W - I$ can be written in terms of $B_j$.

\begin{theorem}
  The integral operator $W - I$ has kernel
  \begin{align*}
    \text{ker} (W - I) (x, y) &= \Bigg(\nu B_0( (y - a) |\nu|, (b - x) |\nu|) + |\nu| B_1( (b - a) |\nu|, (y - x) |\nu|)\\
    &- (\nu + \bar \nu) \sum_{q \ge 0} B_q( (y - x) |\nu|, (b - a) |\nu|) \left({\bar\nu \over |\nu|}\right)^q\Bigg) <_a^b(x, y) \\
    &+ \nu B_0( (y - a) |\nu|, (b- x) |\nu|) >_a^b(x, y).
  \end{align*}
  Moreover, $W$ is unitary.
  \label{t:unitarity}
\end{theorem}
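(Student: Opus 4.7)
The proof proceeds in two stages: summing the series for $W - I$ to recognise the stated Bessel-function kernel, and then establishing unitarity.

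\emph{Summing the series.} Start from $W = I + \sum_{s \ge 1} w_s$ (Theorem \ref{t:conv}) and the decomposition with kernel $f_s <_a^b + g_s >_a^b$ of the corollary preceding it; substitute the closed forms of Theorem \ref{t:main}. The $>_a^b$ part is immediate: since $E_{m,n,p;q} = \ind_{m=p=q,\,n=0}$, only triples $(k,0,k)$ with $s = 2k+1$ and $q = k$ contribute, yielding
\begin{align*}
  \sum_{s \ge 1} g_s &= \sum_{k \ge 0} (-\bar\nu)^k \nu^{k+1} [k,0,k]^\dagger = \nu \sum_{k \ge 0} \frac{\bigl(-|\nu|^2 (y-a)(b-x)\bigr)^k}{(k!)^2} \\
  &= \nu\, B_0\bigl((y-a)|\nu|,\,(b-x)|\nu|\bigr),
\end{align*}
which matches the stated $>_a^b$-coefficient exactly.

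The $<_a^b$ part is the meatier calculation. Exchange the outer summation so that $(m,n,p,q)$ are the running indices (with $s = m+n+p+1$), and split $D_{m,n,p;q}$ into its three cases. The common $-\binom{n}{q}$ piece, present whenever $2q \ge m+n+p$, produces after summing on $m,p$ against $[m,n,p]$ a series of the form $\sum_k \frac{(\cdot)^{k+q}(\cdot)^k}{(k+q)!\,k!}$, which is exactly $B_q$; weighting by the appropriate powers of $\bar\nu/|\nu|$ should give the third summand $-(\nu + \bar\nu)\sum_{q\ge 0} B_q\bigl((y-x)|\nu|,(b-a)|\nu|\bigr)(\bar\nu/|\nu|)^q$. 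The $\binom{n}{q-1}$ term (the dominant piece in the region $2q > m+n+p$), after the reindexing $q \mapsto q+1$, should yield the $\nu B_0\bigl((y-a)|\nu|,(b-x)|\nu|\bigr)$ summand. The diagonal case $2q = m+n+p$, which forces $s$ even, contributes the $|\nu| B_1\bigl((b-a)|\nu|,(y-x)|\nu|\bigr)$ term after reindexing via $q-m$. The main obstacle throughout is careful bookkeeping: tracking the signs, checking that the boundary case $2q = m+n+p$ is not double-counted, and verifying that the Bessel indices align exactly. Corollary \ref{c:symmetry} provides a useful sanity check at each stage.

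\emph{Unitarity.} Set $K := W - I$; its kernel lies in $L^2([a,b)^2)$ so $K$ is Hilbert--Schmidt, and $W^*W = I$ reduces to the single identity $K + K^* + K^*K = 0$. The direct route is to verify this identity from the explicit kernel using the power-series definition of $B_j$ together with Lommel-type integral identities for products of Bessel functions on a bounded interval; the split into $<_a^b$ and $>_a^b$ contributions allows the identity to be decomposed into simpler pieces. A complementary route exploits that each $\mathcal{W}_N$ is unitary by construction: although weak convergence alone does not transfer unitarity, the explicit rate of convergence available from the proof of Theorem \ref{t:conv} can be used to check that $\|\mathcal{W}_N f\| \to \|Wf\|$ on a dense subspace (e.g.\ simple functions subordinate to equipartitions), which combined with weak convergence yields strong convergence of $\mathcal{W}_N$ to $W$ and hence isometry of $W$; a parallel argument for $W^*$ then completes unitarity. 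I expect the Bessel identities in the direct route to be the more delicate step and would attempt the limiting argument first.
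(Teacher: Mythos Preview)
Your overall strategy matches the paper's, and the $g$-computation is correct. However, your decomposition of $f$ misattributes the pieces. The diagonal term $\binom{n}{q-m}$ at $2q = m+n+p$ makes $s = 2q+1$ odd (not even), and the paper shows (equation~\eqref{eq:femq}) that this piece sums to $\nu B_0((y-a)|\nu|,(b-x)|\nu|)$, not to $B_1$. More seriously, summing $-\binom{n}{q}$ alone over the region $2q \ge m+n+p$ does not yield a clean Bessel series: the constraint $m+p \le 2q-n$ truncates the inner binomial sums over $m,p$ and blocks the identification you sketch. The paper's key manoeuvre is instead to split by the parity of $m+n+p$ and, within each parity class, to recognise the telescoping identity
\[
\sum_q D_{m,n,p;q}\, z^q - (\text{one extracted term}) \;=\; (z-1)\sum_{q \ge k}\binom{n}{q}z^q,
\]
with $z = -\bar\nu^2$. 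It is this $(z-1)$ factor---which keeps $+\binom{n}{q-1}$ and $-\binom{n}{q}$ together rather than separating them---that produces the $-(\nu+\bar\nu)\sum_q B_q$ term (equations~\eqref{eq:ferest} and~\eqref{eq:fo1}), while the extracted terms give $\nu B_0$ and $B_1$ respectively (equations~\eqref{eq:femq} and~\eqref{eq:fo2}).

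For unitarity, your direct route is exactly what the paper does: reduce to the integral identity~\eqref{eq:int} and verify it via Lommel's integral, a Sonine--Gegenbauer-type formula~\eqref{eq:sg}, and repeated integration by parts using the relations $\partial_x B_j = -B_{j-1}$, $\partial_y B_j = B_{j+1}$. Your limiting route has a gap: Theorem~\ref{t:conv} establishes weak convergence of the small-angle approximation $W_N$, not of the genuinely unitary $\mathcal W_N = \prod R^N_{j,k}$. The discrepancy $\mathcal W_N - W_N$ arises from $O(N^{-2})$ corrections in each of $\binom{N}{2}$ factors and is not controlled anywhere in the paper; you would have to bound it separately before any norm-preservation argument could start, and that is not obviously easier than the direct Bessel computation.
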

\begin{proof}
  See section \ref{s:unitarity}.
\end{proof}
For example, when $\mu = 0$ and $\lambda > 0$, the kernel of the operator corresponding to the L\'evy stochastic area is
\begin{align*}
    \text{ker} &(W - I) (x, y) = \Bigg(\lambda B_0( (y - a) \lambda, (b - x) \lambda) + \lambda B_1( (b - a) \lambda, (y - x) \lambda)\\
    &- 2 \lambda \sum_{q \ge 0} B_q( (y - x) \lambda, (b - a) \lambda) \Bigg) <_a^b(x, y) + \lambda B_0( (y - a) \lambda, (b- x) \lambda) >_a^b(x, y).
\end{align*}
  Moreover by plugging $D_{m, n, p; q}$ and $E_{m, n, p; q}$ into the integral identity \eqref{eq:int} below, the unitarity of $W$ implies the following combinatorial identity:
  \begin{align*}
    D_{\alpha, \beta, \gamma; \xi} &- \mathbb I_{\alpha = \gamma = \xi - 1, \beta = 0} - {\alpha + \gamma - 1 \choose \gamma} \mathbb I_{\beta + \gamma + 1 = \alpha = \xi} \\
    &- \sum_{m = 0}^\alpha \sum_{p = 0}^{\gamma - \alpha + m} \sum_{n = 0}^{\alpha + \beta -\gamma - m + p -1} D_{m, n, \alpha + \beta - \gamma - m - n + 2 p - 1; \xi - \gamma + p - 1} \\
    &\qquad\qquad\times{\alpha \choose m} {\gamma - \alpha + m + n - p \choose n} {\gamma \choose p} \\
    &+ \sum_{m_1 = 0}^\alpha \sum_{m_2 = 0}^\beta \sum_{n_1 = 0}^{\gamma - 1} \sum_{n_2 = 0}^{\gamma - 1 - n_1} \sum_{p_1 = 0}^{\gamma - 1 - n_1 - n_2} \sum_{t_1 = 0}^\xi (-1)^{\alpha + \gamma - m_1 - n_1 - p_1 + m_2} \\
    &\qquad D_{m_1, n_1 + \beta - m_2, p_1; t_1} D_{m_2 + \alpha - m_1, n_2, \gamma - 1 - n_1 - n_2 - p_1; \alpha + \gamma -\xi - m_1 - n_1 - p_1 + m_2 + t_1} \\
    &\qquad\qquad \times {\alpha \choose m_1} {\beta \choose m_2} {n_1 + n_2 \choose n_1} {\gamma - 1 - n_1 - n_2 \choose p_1} = 0.
  \end{align*}

\section{Dyck paths and Catalan numbers}\label{s:pfmain}
For $m \in \mathbb Z_{\ge 0}$ and $n \in \mathbb Z$, define the binomial coefficient the usual way
\begin{align*}
  {m \choose n} := {m! \over n! (m - n)!} \mathbb I_{0 \le n \le m}.
\end{align*}
For integers $m,n,p$ define a double generalisation of the Catalan numbers and the Catalan's triangle
  \begin{align*}
    C_{m,n,p}:={m+n\choose m}-{m+n\choose m+p+1}.
  \end{align*}
For $\alpha, m, n, p \in \mathbb Z_{\ge 0}$, denote by $T_{\alpha, m, n, p}$ the set of lattice paths $(\rho_i)_{i = 0}^{m + n}$ such that $\rho_0 = \alpha$, $|\rho_i - \rho_{i - 1}| = 1$, $\rho_i \ge - p$, $\rho_{m + n} = \alpha + m - n$.
That is, $T_{\alpha, m, n, p}$ is the set of Dyck paths starting from $\alpha$, having $m$ up-steps, $n$ down-steps that never cross the line $y = - p$.
By the reflection principle we obtain the following lemma, which shows these numbers have a similar combinatorial interpretation to the Catalan numbers.
\begin{lemma}
  When $m, n, p\ge0$ and $m - n \ge - p - 1$, $C_{m, n, p} = |T_{0, m, n, p}|$.
\end{lemma}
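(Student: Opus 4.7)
The plan is to prove this by a standard reflection-principle argument (sometimes called André's reflection or the Bertrand ballot method), treating $|T_{0,m,n,p}|$ as (total paths with $m$ up-steps and $n$ down-steps) minus (paths that touch the forbidden line $y = -p-1$).

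\textbf{Step 1: Count unconstrained paths.} A lattice path $(\rho_i)_{i=0}^{m+n}$ with $\rho_0 = 0$, steps $\pm 1$, and $\rho_{m+n} = m - n$ is specified by choosing the positions of the $m$ up-steps among the $m+n$ steps. So the total number of such paths, ignoring the constraint $\rho_i \ge -p$, is $\binom{m+n}{m}$.

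\textbf{Step 2: Count the bad paths by reflection.} Call a path \emph{bad} if it reaches height $-p-1$ at some point. For any bad path, let $i^*$ be the smallest index with $\rho_{i^*} = -p-1$, and reflect the initial segment $\rho_0, \rho_1, \ldots, \rho_{i^*}$ across the line $y = -p-1$. This produces a path starting at $-(p+1) - (0 - (-(p+1))) = -2p-2$, ending at $m-n$, with $m+n$ steps of $\pm 1$. Conversely, any path from $-2p-2$ to $m-n$ must cross $-p-1$ (since $-2p-2 \le -p-1$ when $p \ge 0$, and the endpoint $m - n \ge -p > -p - 1$, noting that when $m - n = -p-1$ both sides of the target identity vanish trivially), and reflecting its initial segment up to the first hit of $-p-1$ yields a bad path from $0$ to $m-n$. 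This is a bijection. The reflected paths have $u - d = m - n - (-2p-2) = m - n + 2p + 2$ and $u + d = m+n$, so $u = m+p+1$ and $d = n-p-1$. Hence the number of bad paths is $\binom{m+n}{m+p+1}$, where we use the convention that the binomial is $0$ when $n < p+1$ (which is precisely the case when reaching $-p-1$ is impossible).

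\textbf{Step 3: Subtract.} Putting the two counts together,
\begin{align*}
|T_{0,m,n,p}| \;=\; \binom{m+n}{m} - \binom{m+n}{m+p+1} \;=\; C_{m,n,p}.
\end{align*}
The hypothesis $m - n \ge -p-1$ is needed only to ensure that the endpoint is at height $\ge -p-1$, so the set $T_{0,m,n,p}$ and the formula behave consistently (in the boundary case $m-n = -p-1$ both sides are zero).

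\textbf{Main obstacle.} There is no real obstacle; the only care required is in verifying the reflection bijection at the boundary, i.e.\ checking that when $n \le p$ the reflected count automatically vanishes (so no separate case is needed) and that the degenerate case $m-n = -p-1$ is handled correctly by both sides giving $0$. Everything else is a direct application of the reflection principle.
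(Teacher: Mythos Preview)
Your proof is correct and is precisely the reflection-principle argument the paper invokes (the paper states only ``By the reflection principle we obtain the following lemma'' without writing out the details). Your handling of the boundary case $m-n=-p-1$ and the automatic vanishing of $\binom{m+n}{m+p+1}$ when $n\le p$ is exactly what is needed.
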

The doubly generalised Catalan numbers have been discussed in e.g. \cite{reuveni14}.

When $m,n\ge0$ and $p=0$, $C_{m,n,0}$ is reduced to the $(m,n)$th entry in the Catalan triangle (OEIS:A009766) which we denote by $C_{m,n}$; furthermore when $m = n$, $C_{n, n}$ is the $n$th Catalan number which we denote by $C_n$. 

The following recurrence relation will be useful:
\begin{lemma}
If $n \ge 0$, $m \ge p$ and $m + n + p + 1 \ge 0$, then
\begin{align*}
  \sum_{k = 0}^{\lfloor {m + p \over 2} \rfloor} C_{k + n, k} C_{m - k, p - k} = C_{m + n + 1, p}
\end{align*}
  \label{l:recurrence}
\end{lemma}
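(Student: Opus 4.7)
The plan is to prove the identity by combinatorial decomposition, using the lattice-path interpretation from the previous lemma. By that lemma, the right-hand side $C_{m+n+1, p} = |T_{0, m+n+1, p, 0}|$ counts lattice paths starting at height $0$ with $m+n+1$ up-steps and $p$ down-steps that never drop below $0$; each such path ends at height $m+n+1-p$. Since the hypothesis $m \ge p$ guarantees this endpoint is at least $n+1$, every such path must visit height $n$ at some intermediate vertex, so the following construction is well-defined.

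Given a path $\rho \in T_{0, m+n+1, p, 0}$, let $T(\rho)$ be the largest index at which $\rho$ reaches height $n$, and split $\rho$ at $T(\rho)$ into a prefix $\rho^{-}$ (from the origin to height $n$) and a suffix $\rho^{+}$ (from height $n$ to the terminal height). If $\rho^{-}$ contains $k$ down-steps, then it contains $n+k$ up-steps, is non-negative, and terminates at height $n$, so there are $C_{n+k, k}$ choices for it. The suffix $\rho^{+}$ has $m+1-k$ up-steps and $p-k$ down-steps, starts at height $n$, ends at height $m+n+1-p$, and (by maximality of $T(\rho)$) lies strictly above $n$ at every vertex after its first step. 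Its first step is therefore forced to be up; stripping this forced step and translating the remainder downward by $n+1$ yields a non-negative path from $0$ to $m-p$ with $m-k$ up-steps and $p-k$ down-steps, contributing $C_{m-k, p-k}$ choices. Multiplying and summing over $k$ gives the left-hand side, and the map $(\rho^-, \rho^+) \mapsto \rho$ is trivially inverted by concatenation, so the decomposition is bijective.

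The only subtlety, and the step that will need explicit verification, is the range of summation. The natural range is $0 \le k \le p$, since $k$ is bounded above by the total number of down-steps. For $p < k \le \lfloor (m+p)/2 \rfloor$, the factor $C_{m-k, p-k}$ vanishes because $\binom{a+b}{a} - \binom{a+b}{a+1} = 0$ whenever $b < 0$, so extending the upper bound from $p$ to $\lfloor (m+p)/2 \rfloor$ introduces only zero summands and the identity as stated follows. The assumption $m \ge p$ is used twice: once to ensure that the endpoint of $\rho$ lies above $n$ (so that $T(\rho)$ exists), and once to ensure $\lfloor (m+p)/2 \rfloor \ge p$, so the claimed upper index is at least the effective one. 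The main (minor) obstacle is keeping careful track of these inequalities; once the decomposition itself is set up correctly, no further combinatorial identities are needed.
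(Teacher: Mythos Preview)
Your proof is essentially identical to the paper's: both split each path in $T_{0,m+n+1,p,0}$ at its last visit to height $n$, count the prefix by $C_{n+k,k}$ and (after stripping the forced up-step and translating) the suffix by $C_{m-k,p-k}$, and then extend the summation range from $p$ to $\lfloor(m+p)/2\rfloor$ by noting the extra summands vanish. The only omission is that the lemma's hypotheses permit $p<0$ (subject to $m+n+p+1\ge 0$), a case your lattice-path interpretation does not cover since the previous lemma requires nonnegative parameters; the paper handles this separately by observing that both sides are then zero, which you should add as a one-line remark.
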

\begin{proof}
  We first show a basic version of this formula is true: for $n \ge 0$, $m \ge p \ge 0$,
  \begin{align*}
    \sum_{k = 0}^pC_{k + n, k}C_{m - k, p - k} = C_{m + n + 1, p}.
  \end{align*}
  This can be proved using a combinatorial argument similar to one used to prove the recurrence relation of the Catalan numbers which is a special case of the identity above:
  \begin{align*}
    \sum_{k = 0}^p C_k C_{p - k} = C_{p + 1}.
  \end{align*}
  Define a ``stopping time'' $\sigma$ on $T_{0, m + n + 1, p, 0}$ by
  \begin{align*}
    \sigma(\rho) = \max\{i \ge 0: \rho_i = n\}.
  \end{align*}
  Then
  \begin{align*}
    C_{m + n + 1, p} = &\sum_{k = 0}^p |\{ \rho \in T_{0, m + n + 1, p, 0}: \sigma(\rho) = 2 k + n\}| \\
    &= \sum_{k = 0}^p |T_{0, k + n, k, 0}| |T_{n + 1, m - k, p - k, n + 1}| \\
    &= \sum_{k = 0}^p |T_{0, k + n, k, 0}| |T_{0, m - k, p - k, 0}| = \sum_{k = 0}^p C_{n + k, k} C_{m - k, p - k}.
  \end{align*}

If the condition $n \ge 0, m \ge p$ are retained, but $p < 0$ and $m + n + 1 + p \ge 0$, then the LHS is zero because the domain of the summation is empty. 
The RHS is also zero because ${m + n + 1 + p \choose m + n + 1} = {m + n + 1 + p \choose m + n + 2} = 0$.

Since $m \ge p$, we have $p \le \lfloor{m + p \over 2}\rfloor$. 
Moreover, for any $k \in (p, \lfloor{m + p \over 2}\rfloor]$, $C_{m - k, p - k} = {m + p - 2 k \choose m - k} - {m + p - 2 k \choose m - k + 1} = 0$.
Therefore we can extend the domain of the summation from $0 \le k \le p$ to $0 \le k \le \lfloor{m + n \over 2}\rfloor$.
\end{proof}

\begin{lemma}
  For any $B \vdash \pi \in \Pi_s$, if $\pi(0) = 1$ then $r_B = 0$, and if $\pi(0) = 0$ then $r_B > 0$. If $\pi(s) = 1$ then $r_B' = 0$, and if $\pi(s) = 0$ then $r_B' > 0$.
  \label{l:01}
\end{lemma}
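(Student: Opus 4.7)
My plan is to reformulate the partial ordering on the $s+1$ essential coordinates as a chain with prescribed ``up/down'' comparisons between consecutive elements, and then induct along this chain. Write $e_0 := j_p = p_{1, m_1'}$, $e_s := k_p$, and $e_i := p_{i, m_i} = p_{i+1, m_{i+1}'}$ for $1 \le i \le s-1$, so that $\{e_0, e_1, \ldots, e_s\}$ is exactly the set of essential coordinates. Because $m_i = 2 \Leftrightarrow \pi(i) = 1$, the pair-$i$ inequality $p_{i,1} < p_{i,2}$ translates into $e_{i-1} < e_i \Leftrightarrow \pi(i) = 1$, so the direction of each consecutive comparison is dictated by $\pi$. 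Moreover, inspecting the three allowed cases in \eqref{eq:threecases} shows that whenever $(\pi(i), \pi(i+1))$ equals $(1,0)$ or $(0,1)$ (cases 1 and 3), the domain $(*)$ additionally forces the cross-inequality $e_{i-1} < e_{i+1}$, while case 2 imposes no relation beyond the two consecutive ones.

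With this in hand, and taking $\pi(0)$ in the statement to mean the first vertex $\pi(1)$, I would prove the claim ``$\pi(1) = 1 \Rightarrow r_B = 0$'' by showing that $j_p = e_0 < e_k$ in every linear extension $B$ for all $1 \le k \le s$, by induction on $k$. The base $k = 1$ is the pair-1 inequality with $\pi(1) = 1$. For $k = 2$ I handle the edge at position $i = 1$: case 2 yields $e_0 < e_1 < e_2$; case 1 yields $e_0 < e_2 < e_1$ via the cross-inequality; case 3 is excluded by $\pi(1) = 1$; so $j_p < e_2$ in every allowed case. For the inductive step with hypothesis $j_p < e_l$ for all $l \le i$, I split on the case at positions $(i, i+1)$: in cases 2 and 3 the pair-$(i+1)$ inequality gives $e_i < e_{i+1}$ (since $\pi(i+1) = 1$ in both), and the hypothesis $j_p < e_i$ propagates; in case 1 the cross-inequality gives $e_{i-1} < e_{i+1}$, and the hypothesis $j_p < e_{i-1}$ propagates. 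Hence $r_B = 0$. The converse direction $\pi(1) = 0 \Rightarrow r_B \ge 1$ is immediate: then $e_1 = p_{1, 1} < p_{1, 2} = j_p$ and $e_1$ is itself an essential coordinate.

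The statements about $\pi(s)$ and $r_B'$ follow from the same argument applied to the reversed path, via the weight-preserving involution $(i, b_i) \mapsto (s + 1 - i, b_i)$ on $\Pi_s$ used in Corollary \ref{c:symmetry}, which interchanges $(j_p, r_B)$ with $(k_p, r_B')$ and exchanges $\pi(1)$ with $\pi(s)$. The main subtlety I expect is choosing the right inductive hypothesis: a naive one-step statement ``$j_p < e_i \Rightarrow j_p < e_{i+1}$'' fails in case 1, where $e_{i+1} < e_i$, so the induction must maintain the full family $\{j_p < e_l : l \le i\}$ in order to invoke the cross-inequality from $e_{i-1}$ whenever case 1 occurs. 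Once this is arranged, the remainder is routine case-checking on the edges of $\pi$.
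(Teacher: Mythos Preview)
Your proof is correct. You correctly identify the essential coordinates as a chain $e_0,\dots,e_s$ with $e_{i-1}<e_i\Leftrightarrow\pi(i)=1$, extract the extra cross-inequalities $e_{i-1}<e_{i+1}$ from cases~1 and~3 of \eqref{eq:threecases}, and run a clean induction maintaining $j_p<e_l$ for all $l\le i$; the converse and the symmetry reduction for $r_B'$ are also fine.

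The paper's route is different and shorter. Rather than inducting along the chain of essential coordinates, it observes directly from the three cases in the domain $(*)$ that both sequences $(p_{i,1})_{i}$ and $(p_{i,2})_{i}$ are weakly increasing in $i$. Hence $p_{1,1}$ is the global minimum among all coordinates, and since $j_p=p_{1,1}$ precisely when the first vertex is upper, $r_B=0$ follows immediately; when the first vertex is lower, $j_p=p_{1,2}>p_{1,1}$ gives $r_B\ge1$. Your induction effectively re-derives this monotonicity one step at a time via the cross-inequalities, whereas the paper reads it off globally in one line. What your approach buys is an explicit description of the partial order on the $e_i$'s themselves, which is conceptually tidy; what the paper's approach buys is brevity, since it avoids any case analysis beyond noting $p_{i,1}\le p_{i+1,1}$ and $p_{i,2}\le p_{i+1,2}$ in each of the three allowed configurations.
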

\begin{proof}
  We show the claim for $\pi(0)$, as the one for $\pi(s)$ can be deduced from the symmetry property.
  If $\pi(0) = 1$, then $\pi(1) = 0$ or $1$.
  If $\pi(1) = 0$ then by the correspondence \eqref{eq:threecases}, for any $p \in \Omega_\pi$, the first four coordinates have the ordering $j_p = p_{1,1} < p_{2, 1} < p_{1, 2} = p_{2, 2}$. Since $p_{1, 1} \le p_{k, 1}, k \ge 2$, and $p_{1, 1} < p_{1, 2} \le p_{k, 2}, k \ge 2$. Thus $j_p$ is the smallest (essential) coordinate and $r_B = 0$.
  If $\pi(1) = 1$ then $j_p = p_{1, 1} < p_{1, 2} = p_{2, 1} < p_{2, 2}$ hence it's also the smallest coordinates and $r_B = 0$.

  If $\pi(1) = 0$, then $\pi(1) = 1$ and by \eqref{eq:threecases}, for any $p \in \Omega_\pi$, the first four coordinates are ordered as $p_{1, 1} = p_{2, 1} < p_{1, 2} = j_p < p_{2, 2}$.
  Hence $j_p$ is greater than at least one other essential coordinate and $r_B > 0$.
\end{proof}

In some extreme cases the coefficient $D_{m, n, p; q}$ can be calculated directly.
We denote by $D^\pi_{m, n, p; q}$ the contribution to $D_{m, n, p; q}$ from path $\pi$.
\begin{lemma}
\begin{itemize}
  \item (Case A) $D_{0, 2 k, 0; k + 1} D^{\vee^k}_{0, 2 k, 0; k + 1} = C_k$. Conversely, if $m = 0, 2 q = n + p + 2$, then $D_{m, n, p; q} > 0$ only if $p = 0, 2 q - 2 = n$.
  \item (Case B) $D_{0, n, 2 k - n + 1; k + 1} = D^{\setminus \wedge^k}_{0, n, 2 k - n + 1, k + 1} = C_{k, n - k}$ for $0 \le n \le 2 k + 1$.
  \item (Case C) $D_{2 k - n + 1, n, 0; k + 1} = D^{\wedge^k /}_{2 k - n + 1, n, k + 1} = C_{k, n - k}$ for $0 \le n \le 2 k + 1$.
  \item (Case D) $D_{r, 2 k - r - r', r'; k} = D^{\wedge^k}_{r, 2 k - r - r', r'; k} = C_{k - r, k - r', r - 1} = C_{k - r', k - r, r' - 1}$ for $0 \le r + r' \le 2 k$.
\end{itemize}
Moreover, $E_{m, n, p; q} = \ind_{m = p = q, n = 0}$.
  \label{l:abcd}
\end{lemma}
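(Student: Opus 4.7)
The plan is to treat each part of the lemma in two steps: first, argue the \emph{uniqueness} of the path contributing to each extremal coefficient; second, \emph{count} the corresponding linear extensions by reducing the resulting poset on essential coordinates to that of a standard Young tableau of rectangular (or near-rectangular) shape.

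For uniqueness I appeal to Lemma~\ref{l:01} together with the saturation observation made in the paragraph above \eqref{eq:threecases} that $\epsilon(\pi)$, the parities at the endpoints, and the length of $\pi$ together determine the number of horizontal edges, with $2\epsilon(\pi)\ge s-1$ and equality forcing an alternating pattern. In Case A, rank $(0,0)$ combined with $q=k+1$ upper vertices among $s=2k+1$ vertices gives precisely the saturated alternating path $\vee^k$. Cases B and C are the same with exactly one end lower, forcing $\setminus\wedge^k$ and $\wedge^k/$ respectively, while Case D uses $r,r'\ge1$ to put both ends at lower vertices, pinning the path to $\wedge^k$. The $E$-coefficient is handled identically: a rank of $(m+n+1,n+p+1)$ with $n=0$ produces the saturated $\wedge^m$; non-zero $n$ contradicts saturation, and any deviation $m\ne p$ or $q\ne m$ is ruled out by counting upper vertices.

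For counting, I read off the poset on the $s+1$ essential coordinates directly from the three allowed edge types in \eqref{eq:threecases}. For $\vee^k$ (Case A) a clean relabelling produces two chains $a_0<a_1<\cdots<a_k$ and $b_1<\cdots<b_{k+1}$ together with the column relations $a_j<b_j$ for $j=1,\dots,k$; here $j_p=a_0$ and $k_p=b_{k+1}$ are forced to be the global minimum and maximum, so after removing them one counts linear extensions of the poset for a standard Young tableau of rectangular shape $(k,k)$, namely $C_k$. For $\wedge^k$ (Case D) the analogous analysis yields the SYT poset of shape $(k+1,k+1)$ on $\{a_1,\dots,a_{k+1},b_0,\dots,b_k\}$ with $j_p=b_0$ and $k_p=a_{k+1}$; imposing the rank $(r,r')$ pins the positions of $b_0$ and $a_{k+1}$ inside the linear order, and translating to a lattice-path picture (one up-step per $a$-element, one down-step per $b$-element) produces Dyck-type walks between two explicit barriers. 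The count by reflection (or by iterating Lemma~\ref{l:recurrence}) is exactly $C_{k-r,k-r',r-1}$. Cases B and C yield a near-rectangular skew poset whose extension count is the Catalan-triangle entry $C_{k,n-k}$; Case C then follows from Case B via the weight-preserving path inversion used in Corollary~\ref{c:symmetry}. Finally, the $E$-case reduces to the $\wedge^m$ poset with rank $(m+1,m+1)>s$, which collapses the ordering to the single sequence $a_1<\cdots<a_{m+1}<b_0<\cdots<b_m$, giving exactly one linear extension.

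The main obstacle is Case D: converting the pair of rank constraints into floor and ceiling barriers on the associated Dyck walk and invoking the reflection principle to arrive at the closed form $\binom{2k-r-r'}{k-r}-\binom{2k-r-r'}{k}$. The built-in symmetry $C_{k-r,k-r',r-1}=C_{k-r',k-r,r'-1}$ is then transparent from $\binom{2k-r-r'}{k-r}=\binom{2k-r-r'}{k-r'}$, in agreement with the path-inversion symmetry of Corollary~\ref{c:symmetry}.
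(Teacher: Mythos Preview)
Your proposal is correct and follows essentially the same route as the paper: uniqueness of the contributing path via Lemma~\ref{l:01} plus the saturation constraint $2\epsilon(\pi)\ge s-1$, then a bijection between linear extensions of the two-row poset and Dyck paths, with the rank condition fixing the initial and terminal segments so that reflection gives $C_{k-r,k-r',r-1}$. The only cosmetic difference is that the paper reads the rank $(r,r')$ as forcing the first $r+1$ and last $r'+1$ steps of the Dyck path (so a single floor barrier remains after truncation), rather than your ``two barriers'' picture, and it deduces Cases A--C as simpler variants of Case D rather than via separate SYT shapes; the $E$-case is argued identically.
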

\begin{proof}
  First we show the first identity in each case.
  In Case D, for there to be $k$ upper vertices and $k + 1$ lower vertices, the path can only be $\wedge^k$.

  For Case A, since the rank is $(0, 0)$, by Lemma \ref{l:01}, any path $\pi$ contributing to $D_{0, 2 k, 0; k + 1}$ has to begin and end with upper vertices.
  Removing these two vertices resulting a path of length $2 k - 2$, $k - 1$ upper vertices and $k$ lower vertices, which is the same as Case D.
  Therefore $\pi = \vee^k$.

  With the same arguments the paths for Case B and C are also determined to be $\setminus\wedge^k$ and $\wedge^k/$ respectively.

  Now we show the second identity in Case D, as Cases A, B and C are simpler variations of D and can be verified similarly. We achieve this by associating partial orderings with sets of Dyck paths.
  The path $\pi = \wedge^k$ imposes the following ordering of the essential coordinates:
  \begin{align*}
    \begin{array}{ccccccccccc}
      p_{1, 1} & < & p_{3, 1} & < & p_{5, 1} & < & \dots & < & p_{2 k - 1, 1} & < & k_p = p_{2 k + 1, 1}\\
      \wedge &   & \wedge &   & \wedge &   & \dots & & \wedge & & \wedge \\
      j_p = p_{1, 2} & < & p_{2, 2} & < & p_{4, 2} & < & \dots & < & p_{2 k - 2, 2} & < & p_{2 k, 2}
    \end{array}
  \end{align*}
  We relabel these coordinates by $t_{1, 1} := p_{1, 1}, t_{2, 1} = p_{3, 1}$ and so on, to obtain
  \begin{equation}
    \begin{array}{ccccccccccc}
      t_{1, 1} & < & t_{2, 1} & < & t_{3, 1} & < & \dots & < & t_{k, 1} & < & t_{k + 1, 1}\\
      \wedge &   & \wedge &   & \wedge &   & \dots & & \wedge & & \wedge \\
      t_{1, 2} & < & t_{2, 2} & < & t_{3, 2} & < & \dots & < & t_{k, 2} & < & t_{k + 1, 2}
    \end{array}
    \label{eq:array}
  \end{equation}
  There is a one-one correspondence between the set of all linear extensions of this partial ordering (namely $\{B: B \vdash \pi\}$) and $T_{0, k + 1, k + 1, 0}$.
  The Dyck path $\rho$ corresponding to the linear extension $t_{m_1, b_1} < t_{m_2, b_2} < \dots < t_{m_{k + 1}, b_{k + 1}}$ is defined by
  \begin{align*}
    \rho(i) =
    \begin{cases}
      \rho(i - 1) + 1, & \text{ if } b_i = 1\\
      \rho(i - 1) - 1, & \text{ if } b_i = 2
    \end{cases}
  \end{align*}
  Clearly, the rank of a linear extension $B$ being $(r, r')$ is equivalent to the corresponding Dyck path starting with $r$ up-steps followed by a down-step and concluding with one down-step with $r'$ up-steps.
  These cut off the first $r + 1$ and the last $r' + 1$ steps corresponds to $T_{r - 1, k - r, k - r', 0}$.
  Therefore 
  \begin{align*}
  D^{\wedge^k}_{r, 2 k - r - r', r'; k} = |T_{r - 1, k - r, k - r', 0}| = |T_{0, k - r, k - r', r - 1}| = C_{k - r, k - r', r - 1}.
  \end{align*}
  If $r = 0$, $r' = 0$, then by Lemma \ref{l:01} $D^{\wedge^k}_{r, 2 k - r - r', r'; q} = 0$, which agrees with $C_{k - r, k - r', r - 1}$ as well.
  On the other hand, since the paths of $T_{0, k, k, 0}$ only have $k$ up- and down-steps, the LHS is 0 if $r > k$ or $r' > k$, which agrees with the right hand side.

  Finally, the $\wedge^k$ are the only possible paths to contribute to the coefficients $E$, which record the instances when $k_p < j_p$.
  The corresponding linear extension $B \vdash \wedge^k$ is $\{t_{1,1} < t_{2,1} < \dots < t_{k + 1, 1} < t_{1, 2} < t_{2, 2} < \dots < t_{k + 1, 2}\}$.
  For any other paths, by Lemma \ref{l:01}, any path starting with $\setminus$ or ending with $/$ has $j_p$ as the smallest or $k_p$ as the greatest essential coordinate; on the other hand, any horizontal edge will result in $j_p < k_p$.
\end{proof}

From the above proof we can deduce a stronger version of Lemma \ref{l:01}: $D^\pi_{0, n, p; q} \neq 0$ only if $\pi(0) = 1$, and for $m \ge 1$, $D^\pi_{m, n, p; q} \neq 0$ only if $\pi(0) = 0$ and begins with $\wedge^{m - 1}$. We also refer to this stronger version as Lemma \ref{l:01}.

\begin{proof}[Proof of Theorem \ref{t:main}]
Case D covers the $2 q = m + n + p$ case; 
moreover, by the same argument as in the proof of the first identities in each case of Lemma \ref{l:abcd}, for $\pi \in \Pi_{m + n + p + s}$,
\begin{align*}
  \epsilon(\pi) + 1 \ge m + n + p + 1 - \epsilon(\pi)
\end{align*}
therefore $D_{m, n, p; q} \neq 0$ only if $2 q \ge m + n + p$.
Thus it suffices to show
\begin{align*}
D_{m, n, p; q} = C_{q - 1, n - q + 1},\quad 2 q > m + n + p.
\end{align*}

We group the paths into ones starting with $\wedge^k/-$ (call the set of such paths $\Pi^{\wedge k}$) and ones starting with $\vee^k-$ (call the set of such paths $\Pi^{\vee k}$).
Then by Lemma \ref{l:01} $D_{m, n, p; q}$ are contributed from $\Pi^{\wedge k}$ if $k - 1 \ge m > 0$, and from $\Pi^{\vee^k}$ if $m = 0$:
\begin{align*}
D_{m, n, p; q} = 
\begin{cases}
\sum_{k \ge m - 1} \sum_{\pi \in \Pi^{\wedge k}} D^\pi_{m, n, p; q}, & m > 0\\
\sum_{k \ge 0} \sum_{\pi \in \Pi^{\vee k}} D^\pi_{m, n, p; q}, & m = 0
\end{cases}
\end{align*}
where $D^\pi_{m, n, p; q}$ is the contribution from path $\pi$ to the coefficient $D_{m, n, p; q}$.
Therefore we divide the proof into two cases, $m = 0$ and $m > 0$.
The formula for $D_{m, n, p; q}$ with $m + n + p \le 2$ can be verified by hand (the reader can check their calculation against \eqref{eq:123}), so we assume the formula is true for $s \le m + n + p$, and we want to use induction to verify the formula of $D_{m, n, p; q}$ in general.

\subsection{$m > 0$}
When $m > 0$, for any $\pi \in \Pi^{\wedge k}$, 
\begin{align*}
D^\pi_{m, n, p; q} = D_{m, 2 k + 1 - m, 0; k + 1} D^{\theta_{2 k + 2} \pi}_{0, m + n - 2 k - 2, p; q - k - 1},
\end{align*}
where $\theta_r: \Pi_s \to \Pi_{s - r} \forall s \ge r$ is the shifting operator such that $(\theta_r \pi)(j) = \pi(j + r)$.

Summing over $k$ and $\pi \in \Pi^{\wedge k}$ we have
\begin{align*}
D_{m, n, p; q} &= \sum_{k \ge m - 1} \sum_{\pi \in \Pi^{\wedge k}} D^\pi_{m, n, p; q} \\
&= \sum_{k = m - 1}^{\lfloor{m + n - 2 \over 2}\rfloor} \sum_{\pi \in \Pi^{\wedge k}} D_{m, 2 k + 1 - m, 0; k + 1} D^{\theta_{2 k + 2} \pi}_{0, m + n - 2 k - 2, p; q - k -1} \\
&= \sum_{k = m - 1}^{\lfloor{m + n - 2 \over 2}\rfloor} D_{m, 2 k + 1 - m, 0; k + 1} D_{0, m + n - 2 k - 2, p; q - k - 1} \\
&= \sum_{k = m - 1}^{\lfloor{m + n - 2 \over 2}\rfloor} C_{k, k - m + 1} C_{q - k - 2, m + n - q - k} \\
&= \sum_{k = 0}^{\lfloor{n - m \over 2}\rfloor} C_{k + m - 1, k} C_{q - m - 1 - k, n - q + 1 - k}
\end{align*}
where the last two equalities comes from the Case C and the induction assumption.

To apply Lemma \ref{l:recurrence}, we check the three conditions hold:
(1) The condition ``$n \ge 0$'' becomes $m - 1 \ge 0$: this is correct as $m > 0$.
(2) ``$m \ge p$'' is $2 q \ge m + n + 2$: we know that $2 q > m + n + p$, so either $2 q \ge m + n + 2$ or $2 q = m + n + 1$, in the latter since $m + n + 1 \le m + n + p + 1 \le 2 q$, we have $p = 0$ and this is covered by Case C.
(3) ``$m + n + p + 1 \ge 0$'' becomes $n \ge 0$, which is evidently true by the definition of $D_{m, n, p; q}$.
The upper bound of the summation domain ``$\lfloor{m + p \over 2}\rfloor$'' becomes $\lfloor{n - m \over 2}\rfloor$. 
Therefore we can apply Lemma \ref{l:recurrence} to the sum above and obtain
\begin{align*}
D_{m, n, p; q} = C_{q - 1, n - q + 1}.
\end{align*}

\subsection{$m = 0$}
When $m = 0$, similarly, for any $\pi \in \Pi^{\vee k}$,
\begin{align*}
D^\pi_{0, n, p; q} = D_{0, 2 k, 0; k + 1} D^{\theta_{2 k + 1} \pi}_{0, n - 2 k - 1, p; q - k - 1}.
\end{align*}
Again, summing over $k$ and $\pi \in \Pi^{\vee k}$ we have
\begin{align*}
D_{0, n, p; q} = \sum_{k \ge 0} \sum_{\pi \in \Pi^{\vee k}} D^\pi_{m, n, p; q} = \sum_{k = 0}^{\lfloor{n - 1 \over 2}\rfloor} \sum_{\pi \in \Pi^{\vee k}} D_{0, 2 k, 0; k + 1} D^{\theta_{2 k + 1} \pi}_{0, n - 2 k - 1, p; q - k - 1} \\
= \sum_{k = 0}^{\lfloor{n - 1 \over 2}\rfloor} D_{0, 2 k, 0; k + 1} D_{0, n - 2 k - 1, p; q - k - 1} = \sum_{k = 0}^{\lfloor{n - 1 \over 2}\rfloor} C_{k, k} C_{q - k - 2, n - q - k + 1}
\end{align*}
where the last equality comes from Case A, the induction assumption and the fact that $C_k = C_{k, k}$.
Once again, we want to check the conditions in order to apply Lemma \ref{l:recurrence}.
The condition ``$n \ge 0$'' is obvious.
``$ m \ge p $'' is equivalent to $2 q \ge n + 3$. 
Since $2 q > n + p$, there are two possibilities apart from ``$m \ge p$'':
\begin{enumerate}
  \item $2 q = n + p + 1$. This is covered by Case B.
  \item $2 q = n + p + 2$ and $p = 0$. This is covered by Case A.
\end{enumerate}
``$m + n + p + 1 \ge 0$'' is again equivalent to $n \ge 0$, which is evidently true.
The upper bound of the summation domain ``$\lfloor{m + p \over 2}\rfloor$'' is $\lfloor{n - 1 \over 2}\rfloor$. 
Therefore we can apply Lemma \ref{l:recurrence} to the sum above and obtain:
\begin{align*}
D_{0, n, p; q} = C_{q - 1, n - q + 1}.
\end{align*}
\end{proof}

\section{Proof of Theorem \ref{t:conv}}\label{s:pfconv}
In this section we often abuse notations and do not differentiate between operators and their kernels.
Without loss of generality assume $|\nu| = 1$ (otherwise one can scale $(a, x, y, b)$).
We only consider the generating function of coefficient $D$, as the case for $E$ can be dealt with similarly.
We write 
\begin{align*}
  \phi^N(m, n, p) &= \sum_{1 \le j < k \le N} \ind_{A_j}(x) \ind_{A_k}(y) \prod_{\alpha = 0}^{m - 1} (x_{j - 1} - a - \alpha \Delta_N) \prod_{\beta = 0}^{n - 1} (x_{k - 1} - x_j - \beta \Delta_N)\\
  &\qquad\times\prod_{\gamma = 0}^{p - 1} (b - x_k - \gamma \Delta_N)\\
  \phi(m, n, p) &= (x - a)^m (y - x)^n (b - y)^p \ind_{a \le x < y < b},
\end{align*}
where we recall $\Delta_N = {b - a \over N}$.
Then $0 \le \phi^N(m, n, p) \le \phi(m, n, p) \le (b - a)^{m + n + p}$.
We also write
\begin{align*}
  \D_{m, n, p} = \sum_{q = 0}^s {D_{m, n, p; q} \over m! n! p!} (-1)^q \nu^{m + n + p + 1 - 2 q}.
\end{align*}
Then
\begin{align*}
  W_N = \sum_{s = 1}^{N - 1} w_{s, N} + \sum_{s = 1}^{2 N - 3} v_{s, N}
\end{align*}
where $v_{s, N}$ are the degenerate terms, and
\begin{align*}
  w_{s, N} = \sum_{m + n + p = s - 1} \D_{m, n, p} \phi^N(m, n, p).
\end{align*}
The reason for the range of the sum for $s$ to be $1 \le s \le N - 1$ is because $w_{s, N} = 0$ for $s \ge N$, as $\phi^N(m, n, p) = 0$ for $m + n + p \ge N - 1$.

\begin{proof}[Proof of Theorem \ref{t:conv}]
We divide the proof into three parts:
\begin{enumerate}
  \item $\sum_{m + n + p \ge N} \D_{m, n, p} \phi^N(m, n, p) \overset{N \to \infty}{\to} 0$ uniformly on $[a, b)^2$. This shows the limit exists.
  \item $\sum_{s = 1}^{N - 1} w_{s, N}$ converges weakly to $W$.
  \item The degenerate terms vanish uniformly: $\sum_{s = 1}^{2 N - 3} v_{s, N}$ are arbitrarily small as $N$ grows bigger.
\end{enumerate}

\subsection{Part 1}
By the formula of $D_{m, n, p; q}$ a bound can be immediately obtained:
\begin{align*}
  0 \le D_{m, n, p; q} \le {n \choose \lfloor n / 2 \rfloor} \le 2^n.
\end{align*}
Similarly one can bound the trinomial coefficient ${(m + n + p)! \over m! n! p!} \le 3^{m + n + p}$. Combining these two bounds we obtain
\begin{align*}
  |\D_{m, n, p}| = \left|\sum_{q = 0}^{m + n + p + 1} (-1)^q \nu^{m + n + p + 1 - 2 q} {D_{m, n, p} \over m! n! p!}\right|
  \le (m + n + p + 1) {6^{m + n + p} \over (m + n + p)!}
\end{align*}

Therefore
\begin{align*}
  \left| \sum_{m + n + p \ge N} \D_{m, n, p} \phi^N(m, n, p)\right| &\le \sum_{m + n + p \ge N}(m + n + p + 1) {(6 (b - a))^{m + n + p} \over (m + n + p)!} \\
  &= \sum_{r \ge N} {r + 2 \choose 2} (r + 1) {(6 (b - a))^{r} \over r!} \to 0
\end{align*}
as $N \to \infty$.

\subsection{Part 2}
We want to show that for any $\epsilon > 0$ and sufficiently large $N$, we have $\langle f, \sum_{s \le N - 1} (w_{s, N} - w_s) g \rangle < \epsilon \|f\|_2 \|g\|_2$ for testing functions $f, g \in L^2([a, b))$.

Equivalently, we must show that
\begin{align*}
  \Bigg|\int_a^b \int_a^b \bar f(x, y) \sum_{m + n + p \le N - 2} \D_{m, n, p} (\phi^N(m, n, p) - \phi(m, n, p)) g(x, y) dx dy \Bigg| \le \epsilon \|f\|_2 \|g\|_2.
\end{align*}
We divide it into two further parts.
\begin{enumerate}
  \item $(x - a)^m (y - x)^n (b - y)^p \mathbb \sum_{1 \le j < k \le N} I_{A_j \times A_k}(x, y) \approx (x - a)^m (y - x)^n (b - y)^p \mathbb I_{a \le x < y < b}$,
  \item $(x - a)^m (y - x)^n (b - y)^p \sum_{1 \le j < k \le N} \mathbb I_{A_j \times A_k}(x, y) \approx  \sum_{1 \le j < k \le N} \tau^N(j, k; m, n, p) \cdot \mathbb I_{A_j}(x) \mathbb I_{A_k}(y) $;
\end{enumerate}
where
\begin{align*}
  \tau^N(j,k;m,n,p) := \prod_{\alpha = 0}^{m - 1} (x_{j - 1} - a - \alpha \Delta_N) \prod_{\beta = 0}^{n - 1} (x_{k - 1} - x_j - \beta \Delta_N) \prod_{\gamma = 0}^{p - 1} (b - x_k - \gamma \Delta_N).
\end{align*}
\subsubsection{Part 2.1}
We want to show that
\begin{align*}
  \Bigg|\int_a^b \int_a^b \bar f(x, y) \sum_{m + n + p \le N - 2} \D_{m, n, p} &\sum_{1 \le j < k \le N} (x - a)^m (y - x)^n (b - y)^p \\
  &(\mathbb I_{A_j}(x) \mathbb I_{A_k}(y) - \ind_{a \le x < y < b}) g(x, y) dx dy\Bigg| \le \epsilon.
\end{align*}
Denote the left hand side by $B_N$, then
by (1) and that $(x - a)^m (y - x)^n (b - y)^p \le (b - a)^{m + n + p}$ for $a \le x \le y \le b$,
we have that
\begin{align*}
 B_N \le \sum_{m + n + p \le N - 2} (m + n + p + 1) {(6(b - a))^{m + n + p} \over (m + n + p)!} \left|\sum_{j = 1}^N \int \int_{x_{j - 1} \le x < y < x_j} \bar f(x) g(y) dx dy\right| \\
 \le \sum_{r \le N - 1} (r + 1) {r + 2 \choose 2} {(6 (b - a))^r \over r!} \left|\sum_{j = 1}^N \int \int_{x_{j - 1} \le x < y < x_j} \bar f(x) g(y) dx dy\right|
\end{align*}
The term in the modulus can be bounded by repeated use of Cauchy-Schwartz inequality:
\begin{align*}
 \left|\sum_{j = 1}^N \int \int_{x_{j - 1} \le x < y < x_j} \bar f(x) g(y) dx dy\right| \le \sum_{j = 1}^N \int_{x_{j-1}}^{x_j} |f(x)| dx \int_x^{x_j} |g(y)| dy \\
 \le \sum_{j = 1}^N \int_{x_{j-1}}^{x_j} |f(x)| \sqrt{x_j - x} \|g\|_2 dx \le \sum_{j = 1}^N {(b - a)^2 \over 2 N^2} \|f\|_2 \|g\|_2 = {(b - a)^2 \over 2 N} \|f\|_2 \|g\|_2.
\end{align*}
Therefore
\begin{align*}
  B_N \le \sum_{r = 0}^{N - 1} (r + 1){r + 2 \choose 2} {(6 (b - a))^r \over r!} {(b - a)^2 \over 2 N} \|f\|_2 \|g\|_2 \le C N^{- 1} \|f\|_2 \|g\|_2
\end{align*}
for some constant $C$,
where the second bound comes from the fact that $\sum_{r \ge 0} (r + 1) {r + 2 \choose 2} {(6 (b - a))^r \over r!} < \infty$.
\subsubsection{Part 2.2}
We establish the following uniform convergence, from which weak convergence will follow:
{\small
\begin{align*}
  \sum_{m + n + p \le N - 2} & \D_{m, n, p} \sum_{1 \le j < k \le N} \mathbb I_{A_j}(x) \mathbb I_{A_k}(y) \left(\tau^N(j, k; m, n, p) - (x - a)^m (y - x)^n (b - y)^p \right) \to 0
\end{align*}
}
When $a \le x \le y \le b$, $\tau^N$ is non-negative, and for $m + n + p \le N$ we use a telescoping series:
\begin{align*}
  &(x - a)^m (y - x)^n (b - y)^p - \tau^N(j, k; m, n, p)\\
  &= (x - x_{j - 1}) \dots \\
   &+ (x_{j - 1} - a) (x - x_{j- 2}) \dots \\
   &+ (x_{j - 1} - a) (x_{j - 2} - a) (x - x_{j - 3}) \dots\\
   &+ \dots \\
   &+ (x_{j - 1} - a) (x_{j - 2} - a) (x_{j - 3} - a) \dots (x_{k + p - 2} - y) (b - y)\\
   &+ (x_{j - 1} - a) (x_{j - 2} - a) (x_{j - 3} - a) \dots (b - x_{k + p - 2}) (x_{k + p - 1} - y)\\
   &+ (x_{j - 1} - a) (x_{j - 2} - a) (x_{j - 3} - a) \dots (b - x_{k + p - 2}) (b - x_{k + p - 1})\\
   &- (x_{j - 1} - a) (x_{j - 2} - a) (x_{j - 3} - a) \dots (b - x_{k + p - 2}) (b - x_{k + p - 1})\\
   &\le N^{-1} (b - a)^{m + n + p} \left(\sum_{\alpha = 1}^m \alpha + \sum_{\beta = 1}^n \beta + \sum_{\gamma = 1}^p \gamma\right) \\
   &\le N^{-1} (b - a)^{m + n + p} {3 \over 2} (m + n + p) (m + n + p + 1).
\end{align*}
Therefore
\begin{align*}
  \Bigg|&\sum_{m + n + p \le N} \D_{m, n, p} \sum_{1 \le j < k \le N} \mathbb I_{A_j}(x) \mathbb I_{A_k}(y) \big(\tau^N(j, k; m, n, p) \\
  &- (x - a)^m (y - x)^n (b - y)^p \big)\Bigg| 
\le N^{-1} \sum_{r \le N}  {k + 2 \choose 2} {3 \over 2} r (r + 1)^2 {(6(b - a))^r \over r!} \le C N^{-1}.
\end{align*}

\subsection{Part 3}
The degenerate terms are the total orderings of path of length $s$ with some repeated essential coordinates.
If such a total ordering $J$ has $s + 1 - d$ non-repeated coordinates, then we call $d$ the {\em degree} of degeneration, or we say that there are $d$ degenerations in $J$.
Each degeneration happens on a wedge part of a path, that is, any two essential coordinates $p_{i_1, j_1} = p_{i_2, j_2}$ if and only if they correspond to parts of the same $\wedge^k$ part of a path for some $k$.
On the other hand, degenerations happen in pairs. That is, for an array $p$, there do not exist three essential coordinates equal to each other, which would violate the partial ordering.
Therefore, given a path of length $s - 1$, the number of degenerate total orderings with $d$ degenerations is bounded by ${{s - 1 \over 2} \choose d}^2$.
Moreover, the number of paths of length $s - 1$ is the Fibonacci number ${\Phi^{s + 2} - (- \Phi)^{- s - 2} \over \sqrt 5}$ where $\Phi = {\sqrt 5 + 1 \over 2}$.
Since a path of length $s - 1$ can have at most ${s - 1 \over 2}$ wedges, there are at most ${s - 1 \over 2}$ degenerations.

Therefore for each $N$, we have
\begin{align*}
  V_N = \sum_{s = 1}^{2 N - 3} v_{s, N}
\end{align*}
where $V_{s, N}$ is the counterpart of $W_{s, N}$ that collects all degenerate cases of paths of lengths $s - 1$.
By applying the calculation in the proof of Lemma \ref{l:h}, we can see the degenerate (of degree $d$) version of $h_s^N$ is
\begin{align*}
  q_s^N(x, y) = {(b - a)^d \over m! n! p! N^d} \phi^N(m, n, p)
\end{align*}
where $m + n + p + d = s - 1$.
Thus the sum is bounded uniformly by
\begin{align*}
  {(b - a)^{s - 1} \over m! n! p! N^d}.
\end{align*}
Therefore the total sum of degenerate terms is bounded as follows:
\begin{align*}
  |V_N| \le  \Bigg|\sum_{s = 1}^{N - 1} &\sum_{d = 1}^{ {s - 1 \over 2}} \sum_{m + n + p = s - 1 -d} C { {s - 1 \over 2} \choose d}^2 \Phi^{s + 2} {(b - a)^{s - 1} \over m! n! p! N^d} \sum_{q = 0}^s (-1)^q \nu^{s - 2 q}\Bigg| \\
  &\le \sum_{s = 1}^{N - 1} \sum_{d = 1}^{ {s - 1 \over 2}} \sum_{m + n + p = s - 1 - d} C { {s - 1 \over 2} \choose d}^2 \Phi^{s + 2} {(3 (b - a))^{s - 1} \over (s - 1)!} s N^{- d}\\
  &\le \sum_{d = 1}^{ {N - 2 \over 2} } \sum_{s = 2 d + 2}^{N - 1} C {(6 \Phi (b - a))^{s - 1} s \over (s - 1)!} {s - d + 2 \choose 2}\\
  &\le \sum_{d = 1}^{ {N - 2 \over 2}} \sum_{s = 1}^{N - 1} C{(12 \Phi (b - a))^{s - 1} s \over (s - 1)!} \le C(N^{-1} + \sum_{d = 2}^{ {N - 2 \over 2}} N^{-2}) \le C N^{-1}.
\end{align*}
\end{proof}

\section{The unitarity of $W$}\label{s:unitarity}

Let $f$ and $g$ be the generating functions of $D$ and $E$:
\begin{align*}
  f := \sum_{m, n, p \ge o} \sum_{q = 0}^{m + n + p + 1} D_{m, n, p; q} [m, n, p] (-\bar \nu)^q \nu^{m + n + p + 1 - q} = \sum_{s \ge 1} f_s ,\\
  g := \sum_{m, n, p \ge o} \sum_{q = 0}^{m + n + p + 1} E_{m, n, p; q} [m, n, p]^\dagger (-\bar \nu)^q \nu^{m + n + p + 1 - q} = \sum_{s \ge 1} g_s.\\
\end{align*}
And the kernel of $W - I$ is 
\begin{align}
  f(x, y) <_a^b(x, y) + g(x, y) >_a^b(x, y).
  \label{eq:wfg}
\end{align}
One can write down the equation that $f$ and $g$ have to satisfy for $W$ to be unitary.

\begin{proposition}
  For $W$ to be unitary, it suffices to show that for any $a < x < y < b$,
  \begin{align}
    f(x, y) + \overline{g(y, x)} + \int_a^x g(x, z) \overline{g(y, z)} dz + \int_x^y f(x, z) \overline{g(y, z)} dz + \int_y^b f(x, z) \overline{f(y, z)} =0.
    \label{eq:int}
  \end{align}
\end{proposition}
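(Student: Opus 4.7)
The approach is to write $W = I + K$, where $K$ is the integral operator whose kernel \eqref{eq:wfg} equals $f(x,y)$ on the upper triangle $\{a \le x < y < b\}$ and $g(x,y)$ on the lower triangle $\{a \le y < x < b\}$. Correspondingly, the adjoint $K^*$ has kernel $K^*(x,y) = \overline{K(y,x)}$, equal to $\overline{g(y,x)}$ when $x < y$ and $\overline{f(y,x)}$ when $y < x$. Unitarity of $W$ amounts to the operator identity $K + K^* + K K^* = 0$, equivalently to vanishing of this kernel almost everywhere on $[a,b)^2$.

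I would first compute $(KK^*)(x,y) = \int_a^b K(x,z)\overline{K(y,z)}\,dz$ on the upper triangle $x < y$ by splitting the $z$-range into $[a,x)$, $[x,y)$ and $[y,b)$ and substituting the appropriate piece ($f$ or $g$) of the kernel in each sub-interval. On $z < x$ both $K(x,z)$ and $K(y,z)$ land on the lower triangle, giving integrand $g(x,z)\overline{g(y,z)}$; on $x < z < y$ one gets $f(x,z)\overline{g(y,z)}$; and on $z > y$ one gets $f(x,z)\overline{f(y,z)}$. These three integrals, combined with the pointwise contribution $K(x,y) + K^*(x,y) = f(x,y) + \overline{g(y,x)}$ on the upper triangle, reproduce exactly the left-hand side of \eqref{eq:int}. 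Hence \eqref{eq:int} is precisely the statement $(K + K^* + KK^*)(x,y) = 0$ for $a < x < y < b$.

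To extend this to the lower triangle I would invoke self-adjointness of $K + K^* + KK^*$, whose kernel $M$ satisfies $M(x,y) = \overline{M(y,x)}$; vanishing on $\{x<y\}$ therefore forces vanishing a.e.\ on the whole square, and we obtain $WW^* = I$. Finally, by the explicit Bessel-function expression of Theorem \ref{t:unitarity}, the kernel of $K$ is bounded on the compact square $[a,b]^2$, so $K$ is Hilbert--Schmidt and $W = I + K$ is Fredholm of index zero; a Fredholm operator of index zero admitting a right inverse is bijective with its right inverse coinciding with its inverse, so $WW^* = I$ forces $W^*W = I$, and $W$ is unitary.

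The main obstacle is not conceptual but organisational: one must track carefully which piece of the kernel ($f$ or $g$) appears in each of the two factors $K(x,z)$ and $\overline{K(y,z)}$ on each of the three sub-intervals of the $z$-integration, and verify the complex conjugation pattern. The layout of \eqref{eq:int} is already tailored to this region-and-sign pattern dictated by $<_a^b$ versus $>_a^b$, so once the three regions are identified the correspondence with the terms of \eqref{eq:int} is immediate.
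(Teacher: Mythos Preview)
Your derivation of \eqref{eq:int} as the kernel identity $(K+K^*+KK^*)(x,y)=0$ on $\{x<y\}$, together with the self-adjointness observation that this forces vanishing on the full square and hence $WW^*=I$, matches the paper's reduction of its equations \eqref{eq:int} and \eqref{eq:int2} to one another by conjugate-swap.

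Where you diverge is in passing from $WW^*=I$ to $W^*W=I$. The paper writes out the two further equations \eqref{eq:int3}, \eqref{eq:int4} coming from $W^*W=I$, observes they are conjugate-swaps of each other, and then invokes the specific symmetry $f(x,y)=f(a+b-y,a+b-x)$, $g(x,y)=g(a+b-y,a+b-x)$ of Corollary~\ref{c:symmetry} to identify \eqref{eq:int3} with \eqref{eq:int}. Your route instead uses that $K$ is Hilbert--Schmidt (hence compact), so $W=I+K$ is Fredholm of index zero, and a surjective index-zero operator is injective. This is a genuinely different argument: it is more general (it would work without the reflection symmetry of the kernels) and cleaner functional-analytically, at the price of needing a bound on the kernel. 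That bound is available from the estimates on $\mathcal D_{m,n,p}$ in Section~\ref{s:pfconv} (or, as you note, from the Bessel formula in Section~\ref{s:unitarity}, whose derivation does not use unitarity), so there is no circularity. Both arguments are valid; the paper's stays closer to the combinatorial structure already established, while yours trades that for a standard operator-theoretic fact.
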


\begin{proof}
For $W$ to be unitary it is necessary and sufficient to show it is both a coisometry and an isometry%
\begin{eqnarray*}
  W^*W = W W^* = I
\end{eqnarray*}%
Plugging in \eqref{eq:wfg} and using the formulas for kernels of products and adjoints of integral operators we obtain equation \eqref{eq:int} and three ``other'' equations:
\begin{align}
  \overline{f(y, x)} + g(x, y) + \int_a^y g(x, z) \overline{g(y, z)} dz + \int_y^x g(x, z) \overline{f(y, z)} dz \notag\\
  + \int_x^b f(x, z) \overline{f(y, z)} dz = 0, \qquad a < y < x < b \label{eq:int2}\\
  \overline{g(y, x)} + f(x, y) + \int_a^x f(z, y) \overline{f(z, x)} dz + \int_x^y f(z, y) \overline{g(z, x)} dz \notag\\
  + \int_y^b g(z, y) \overline{g(z, x)} dz = 0, \qquad a < x < y < b \label{eq:int3}\\
  \overline{f(y, x)} + g(x, y) + \int_a^y f(z, y) \overline{f(z, x)} dz + \int_y^x g(z, y) \overline{f(z, x)} dz \notag\\
  + \int_x^b g(z, y) \overline{g(z, x)} dz = 0, \qquad a < y < x < b \label{eq:int4}
\end{align}
The equation \eqref{eq:int} and \eqref{eq:int2} are equivalent: one can interchange $x$ with $y$ and take a conjugate in the former to obtain the latter. So are \eqref{eq:int3} and \eqref{eq:int4}.
By the symmetry of $f$ and $g$ from Corollary \ref{c:symmetry}, \eqref{eq:int} and \eqref{eq:int3} are equivalent, hence it suffices to verify \eqref{eq:int} to show the unitarity of $W$.
\end{proof}

\begin{proof}[Proof of Theorem \ref{t:unitarity}]
We divide the proof into two parts: first we write down $f$ and $g$ in a more amenable form, then we proceed to proving the integral identity.
\subsection{The formulas for $f$ and $g$}
As previous, without loss of generality suppose $|\nu| = 1$. 
 Recall that
\begin{align*}
B_j(x, y) = \sum_{n \ge 0}{(-1)^{n + j} x^{n + j} y^n \over (n + j)! n!} = (-1)^j (x / y)^{j / 2} J_j(2 \sqrt{x y}).
\end{align*}
Since $E_{m, n, p; q} = \mathbb I_{m = p = q, n = 0}$,
\begin{align}
g(x, y) = \nu \sum_{m \ge 0} (-1)^m {(y - a)^m (b - x)^m \over m! m!} = \nu B_0(y - a, b - x).
\label{eq:g}
\end{align}
Let $f_e$ and $f_o$ be the parts of the sum of $f$ where $m + n + p$ are even and odd respectively, i.e.
\begin{align*}
  f_e = \sum_{2 | m + n + p} \sum_{q = 0}^{m + n + p + 1} (x - a)^m (y - x)^n (b - y)^p (- \bar \nu)^q \nu^{m + n + p + 1 - q}\\
  f_o = \sum_{2 \nmid m + n + p} \sum_{q = 0}^{m + n + p + 1} (x - a)^m (y - x)^n (b - y)^p (- \bar \nu)^q \nu^{m + n + p + 1 - q}
\end{align*}

The function $f_e$ can be further divided into the ${n \choose q - m}$ part and the rest.
Let $u := (x - a) \nu$, $v := (y - x) \nu$, $w := (b - y) \nu$ and $z = - {\bar \nu \over \nu} = - \bar\nu^2$, then
\begin{equation}
\begin{aligned}
\nu \sum_{q \ge 0} &\sum_{m + n + p = 2 q} {u^m v^n w^p z^{q} \over m! n! p!} {n \choose q - m}\\
 &= \nu \sum_{q \ge 0} \sum_{m + n + p = 2 q} {u^m v^{2 q - m - p} w^p z^{q} \over m! (2 q - m - p)! p!} {2 q - m - p \choose q - m}\\
 &= \nu \sum_{q \ge 0} \sum_{0 \le m, p \le q} {u^m v^{2 q - m - p} w^p z^{q} \over m! (2 q - m - p)! p!} {2 q - m - p \choose q - m}\\
 &= \nu \sum_{q \ge 0} v^{2 q} z^{q} \sum_{m = 0}^q {(u / v)^m \over m! (q - m)!} \sum_{p = 0}^q {(w / v)^m \over m! (q - m)!}\\
 &= \nu \sum_{q \ge 0} {(u + v)^q (v + w)^q z^{q} \over q! q!} = \nu B_0(y - a, b - x). 
\end{aligned}
  \label{eq:femq}
\end{equation}
The rest of $f_e$ is slightly more complicated. Let $k = {m + n + p \over 2}$. We observe that
\begin{align*}
\sum_q D_{m, n, p; q} z^q - {n \choose k - m} z^k = - {n \choose k} z^k + {n \choose k} z^{k + 1} - {n \choose k + 1} z^{k + 1} + \\
\dots + {n \choose n - 1}z^n - {n \choose n} z^n + {n \choose n} z^{n + 1} = (z - 1)\sum_{q = k}^n {n \choose q} z^q. 
\end{align*}
Therefore the rest of $f_e$, i.e. the sum excluding the terms corresponding to ${n \choose k - m}$ is
\begin{equation}
  \begin{aligned}
\nu (z - 1) \sum_{k \ge 0} &\sum_{m + n + p = 2 k} {u^m v^n w^p \over m! n! p!} \sum_{q = k}^n {n \choose q} z^q\\
&= \nu (z - 1) \sum_{k \ge 0} \sum_{n = k}^{2 k} \sum_{q = k}^n {n \choose q} z^q {(u + w)^{2 k - n} \over (2 k - n)!} {v^n \over n!}\\
&= \nu (z - 1) \sum_{k \ge 0} \sum_{n = k}^{2 k} \sum_{q = k}^n {z^q \over q!}{(u + w)^{2 k - n} \over (2 k - n)!} {v^n \over (n - q)!}\\
&= \nu (z - 1) \sum_{k \ge 0} \sum_{n = 0}^k \sum_{q = k}^{k + n} {z^q \over q!} {(u + w)^{k - n} \over (k - n)!} {v^{k + n} \over (k + n -q)!}\\
&= \nu (z - 1) \sum_{k \ge 0} \sum_{n = 0}^k \sum_{q = 0}^n {z^{k + q} \over (k + q)!} {(u + w)^{k - n} \over (k - n)!} {v^{k + n} \over (n - q)!} \\
&= \nu (z - 1) \sum_{k \ge 0} \sum_{q = 0}^k \sum_{n = 0}^{k - q} {z^{k + q} \over (k + q)!} {(u + w)^{k - q - n} \over (k - q - n)!} {v^{k + q + n} \over n!}\\
&= \nu (z - 1) \sum_{k \ge 0} \sum_{q = 0}^k {(zv)^{k + q} \over (k + q)!} {(u + v + w)^{k - q} \over (k - q)!}\\
&= \nu (z - 1) \sum_{q \ge 0} \sum_{k \ge 0} {(zv)^{k + 2 q} \over (k + 2 q)!} {(u + v + w)^k \over k!} 
  \end{aligned}
  \label{eq:ferest}
\end{equation}
We keep \eqref{eq:ferest} to later merge it with a similar term in $f_o$.

For $f_o$, let $k = {m + n + p - 1 \over 2}$. We observe that
\begin{align*}
\sum_q D_{m, n, p; q} &z^q - {n \choose k - m} z^k \\
&= {n \choose k} z^{k + 1} - {n \choose k + 1} z^{k + 1} + \dots + {n \choose n - 1}z^n - {n \choose n} z^n + {n \choose n} z^{n + 1} \\
&= (z - 1)\sum_{q = k + 1}^n {n \choose q} z^q + {n \choose k} z^{k + 1}.
\end{align*}
Following the same procedure which leads \eqref{eq:ferest}, the sum corresponding to the first term in the RHS is
\begin{equation}
\nu (z - 1) \sum_{q \ge 0} \sum_{k \ge 0} {(zv)^{k + 2 q + 1} \over (k + 2 q + 1)!} {(u + v + w)^k \over k!} 
  \label{eq:fo1}
\end{equation}
whereas the contribution from the second term is computed as follows:
\begin{equation}
  \begin{aligned}
\nu \sum_{k \ge 0} &\sum_{m + n + p = 2 k + 1} {u^m v^n w^p \over m! n! p!} {n \choose k} z^{k + 1}\\
&= \nu \sum_{k \ge 0} \sum_{n = k}^{2 k + 1} {v^n \over n!} {(u + w)^{2 k + 1 - n} \over (2 k + 1 - n)!} {n \choose k} z^{k + 1}\\
&= \nu \sum_{k \ge 0} \sum_{n = 0}^{k + 1} {v^{n + k} \over n! k!} {(u + w)^{k + 1 - n} \over (k + 1 - n)!} z^{k + 1}\\
&= \nu \sum_{k \ge 0} {v^k \over k!} {z^{k + 1} (v + u + w)^{k + 1} \over (k + 1)!} = B_1(b - a, y - x). 
  \end{aligned}
  \label{eq:fo2}
\end{equation}

By summing up \eqref{eq:femq}, \eqref{eq:ferest}, \eqref{eq:fo1} \eqref{eq:fo2} and plugging in $u, v, w, z$ we obtain
\begin{align}
f(x, y) = \nu B_0(y - a, b - x) + B_1(b - a, y - x) - (\nu + \bar \nu) \sum_{q \ge 0} B_q(y - x, b - a) \bar\nu^q.
\label{eq:f}
\end{align}
Note that $\sum_{q \ge 0} B_q(y - x, b - a) \bar \nu^q = \sum_{q \ge 0} J_q(2 \sqrt{(b - a) (y - x)})\left( - \sqrt{ {y - x \over b - a}} \bar \nu \right)^q$ is a generating function of the Bessel functions.

\subsection{Verifying the identity \eqref{eq:int}}
We list a few useful properties of $B_j$ (where we let $B_{-1}(x, y) := - B_1(y, x)$):
\begin{enumerate}
\item $\partial_x B_j(x, y) = - B_{j - 1}(x, y), j \ge 0$,
\item $B_0(x, y) = B_0(y, x)$,
\item $B_j(0, y) = \delta_{j0}, j \ge 0$,
\item $\partial_y B_j(x, y) = B_{j + 1}(x, y)$.
\end{enumerate}
And an integral:
\begin{equation}
\begin{aligned}
\int_y^b &B_0(b - x, z - a)  B_j(z - y, b - a) dz = - \sum_{k \ge 0} B_k(b - x, z - a) B_{k + j + 1}(z - y, b - a)\bigg|_y^b\\
&= \begin{cases}
  - \sum_{k \ge 0} B_k(b - x, b - a) B_{k + j + 1}(b - y, b - a), & j \ge 0\\
  \sum_{k \ge 0}B_k(b - x, b - a) B_k(b - y, b - a) - B_0(b - x, y - a) & j = - 1.
\end{cases}
\end{aligned}
  \label{eq:b0bj}
\end{equation}

Substituting for $f$ from \eqref{eq:f} and $g$ from \eqref{eq:g} into \eqref{eq:int} gives
\begin{equation}
  \begin{aligned}
&B_1(b - a, y - x) + \int_a^b B_0(z - a, b - x) B_0(z - a, b - y) dz \\
&+ \int_y^b B_1(b - a, z - x) B_1(b - a, z - y) dz \\
&+ \nu B_0(y - a, b - x) + \bar \nu B_0(x - a, b - y) + \bar \nu \int_x^b B_1(b - a, z - x) B_0(z - a, b - y) dz \\
&+ \nu \int_y^b B_1(b - a, z - y) B_0(z - a, b - x) dz\\
&- (\nu + \bar \nu) \int_y^b \sum_{q \ge 0} B_1(b - a, z - x) B_q(z - y, b - a) \nu^q dz \\
&- (\nu + \bar \nu) \int_y^b \sum_{q \ge 0}B_1(b - a, z - y) B_q(z - x, b - a) \bar\nu^q dz\\
&- \bar\nu(\nu + \bar \nu) \int_x^b \sum_{q \ge 0} B_0(z - a, b - y) B_q(z - x, b - a) \bar\nu^q dz \\
&- \nu (\nu + \bar \nu) \int_y^b \sum_{q \ge 0} B_0(z - a, b - x) B_q(z - y, b - a) \nu^q dz \\
&+ (\nu + \bar\nu)^2 \int_y^b \sum_{q \ge 0} B_q(z - x, b- a) \bar \nu^q \sum_{q \ge 0} B_q(z - y, b - a) \nu^q dz \\
&- (\nu + \bar \nu) \sum_{q \ge 0}B_q(y - x, b - a) \bar\nu^q = 0.
  \end{aligned}
  \label{eq:bint}
\end{equation}

Let $G_j (x) = \sum_{k \ge 0} {(-1)^k x^k \over k! (k + 1)!}$, then for $\alpha > 0$ the following two integrals hold:
\begin{align*}
  &\int_0^x G_0(\alpha z) G_0(\beta z) dz = (\alpha - \beta)^{-1} (\alpha x G_1(\alpha x) G_0(\beta x) - \beta x G_1(\beta x) G_0(\alpha x)),\\
  &\int_0^z G_1(w) G_1(w + \alpha) dw = \alpha^{-1}(z G_1(z) G_0(z + \alpha) - (z + \alpha) G_1(z + \alpha) G_0(z)) + G_1(\alpha).
\end{align*}
The first of these two integrals is the well-known Lommel's integral, see e.g. Section 11 and 94 of \cite{bowman58}.
The second integral written in terms of an indefinite integral of the Bessel functions is
\begin{equation}
\begin{aligned}
  \int &{1 \over \sqrt{w^2 + \beta^2}} J_1(w) J_1\left(\sqrt{w^2 + \beta^2}\right) dw \\
  &= \beta^{-2} \left(w J_1(w) J_0\left(\sqrt{w^2 + \beta^2}\right) - \sqrt{w^2 + \beta^2} J_1\left(\sqrt{w^2 + \beta^2}\right) J_0(w)\right), \beta > 0.
\end{aligned}
  \label{eq:sg}
\end{equation}
This is a special case of the so-called Sonine-Gegenbauer type integral (see e.g. page 415 of \cite{watson95}).
However, the authors have not found an explicit formula like the one on the right hand side of \eqref{eq:sg} in the literature.

By these two integrals we have
\begin{align*}
  \int_a^b &B_0(z - a, b - x) B_0(z - a, b - y) dz \\
  &= \left((y - x)^{-1} \big( (b - y) B_1(b - a, b - y) B_0(b - a, b - x)\right. \\
  &\qquad\left.- (b - x) B_1(b - a, b - x) B_0(b - a, b - y) \right)\\
  \int_y^b &B_1(b - a, z - x) B_1(b - a, z - y) dz \\
  &= B_1(b - a, y - x) - (y - x)^{-1} \left( (b - y) B_1(b - a, b - y) B_0(b - a, b - x) \right. \\
  &\qquad \left.- (b - x) B_1(b - a, b - x) B_0(b - a, b - y) \right).
\end{align*}
Therefore the first and the second lines of \eqref{eq:bint} vanishes.

By the integral \eqref{eq:b0bj} (and interchanging $x$ and $y$ when necessary), the third and the fourth lines are reduced to their real part.

Now if the real part of $\nu$ is $0$ then we are done. Otherwise by subtracting the first and the second lines and the imaginary part of the third and the fourth lines from \eqref{eq:bint}, and dividing the remainder by the real part ${\nu + \bar \nu \over 2}$, we simplify the integral identity into
\begin{align*}
&B_0(y - a, b - x) + B_0(x - a, b - y) + \int_x^b B_1(b - a, z - x) B_0(z - a, b - y) dz \\
&+ \int_y^b B_1(b - a, z - y) B_0(z - a, b - x) dz - 2 \sum_{q \ge 0}B_q(y - x, b - a) \bar\nu^q \\
&- 2 \int_y^b \sum_{q \ge 0} B_1(b - a, z - x) B_q(z - y, b - a) \nu^q dz \\
&- 2 \int_y^b \sum_{q \ge 0}B_1(b - a, z - y) B_q(z - x, b - a) \bar\nu^q dz\\
&- 2 \bar\nu \int_x^b \sum_{q \ge 0} B_0(z - a, b - y) B_q(z - x, b - a) \bar\nu^q dz \\
&- 2 \nu \int_y^b \sum_{q \ge 0} B_0(z - a, b - x) B_q(z - y, b - a) \nu^q dz \\
&+ 2 (\nu + \bar\nu) \int_y^b \sum_{q \ge 0} B_q(z - x, b- a) \bar \nu^q \sum_{q \ge 0} B_q(z - y, b - a) \nu^q dz = 0
\end{align*}

By the integral formulas \eqref{eq:b0bj}, the above identity can be further simplified to
\begin{align*}
& \sum_{k \ge 0} B_k(b - x, b - a) B_k(b - y, b - a) + \sum_{q > 0}\sum_{k \ge 0} B_k(b - x, b - a) B_{k + q}(b - y, b - a) \nu^q \\
&+ \sum_{q > 0} \sum_{k\ge 0} B_k(b - y, b - a) B_{k + q}(B - x, b - a) \bar\nu^q - \sum_{q \ge 0}B_q(y - x, b - a) \bar\nu^q \\
&- \int_y^b \sum_{q \ge 0} B_1(b - a, z - x) B_q(z - y, b - a) \nu^q dz \\
&- \int_y^b \sum_{q \ge 0}B_1(b - a, z - y) B_q(z - x, b - a) \bar\nu^q dz\\
&+ (\nu + \bar\nu) \int_y^b \sum_{q \ge 0} B_q(z - x, b- a) \bar \nu^q \sum_{q \ge 0} B_q(z - y, b - a) \nu^q dz = 0
\end{align*}
It remains to verify the coefficient of $\nu^q$ for each $q \in \mathbb Z$ in the LHS is 0, which can be done by repeated use of integration by parts and the properties of the $B_j$ functions.

When $q = 0$, the coefficient is
\begin{align*}
&\sum_{j \ge 0}B_j(b - x, b - a) B_j(b - y, b - a) - B_0(y - x, b - a) \\
&+ \int_y^b - B_1(b - a, z - y) B_0(z - x, b - a) - B_1(b - a, z - x) B_0(z - y, b - a) \\
&+ \sum_{j \ge 0} B_{j + 1}(z - x, b - a) B_j(z - y, b - a) + \sum_{j \ge 0} B_{j + 1}(z - y, b - a) B_j(z - x, b - a) dz \\
& = \sum_{j \ge 0}B_j(b - x, b - a) B_j(b - y, b - a) - B_0(y - x, b - a) \\
&+ \sum_{j \ge -1} \int_y^b B_j(z - x, b - a) B_{j + 1}(z - y, b - a) + B_j(z - y, b - a) B_{j + 1}(z - z, b - a) dz\\
& = \sum_{j \ge 0}B_j(b - x, b - a) B_j(b - y, b - a) - B_0(y - x, b - a) \\
&\qquad - \sum_{j \ge 0} B_j(z - x, b - a) B_j(z - y, b - a)\big|_y^b = 0,
\end{align*}
where in the fourth line we use integration by parts and the properties of $B_j$.

When $q > 0$, the coefficient is
{\small
\begin{align*}
&\sum_{k \ge 0} B_k(b - x, b - a) B_{k + q}(b - y, b - a) + \int_y^b - B_1(b - a, z - x) B_q(z - y, b - a) \\
&+ \sum_{j \ge 0}B_j(z - x, b - a) B_{j + q - 1}(z - y, b - a) + \sum_{j \ge 0} B_j(z - x, b - a) B_{j + q + 1} (z - y, b - a) dz\\
&= \sum_{k \ge 0} B_k(b - x, b - a) B_{k + q}(b - y, b - a) + \int_y^b \sum_{j \ge 0} B_j(z - x, b - a) B_{j + q - 1}(z - y, b - a) \\
&+ \sum_{j \ge -1} B_j(z - x, b - a) B_{j + q + 1} (z - y, b - a) dz\\
&= \sum_{k \ge 0} B_k(b - x, b - a) B_{k + q}(b - y, b - a) + \int_y^b \sum_{j \ge 0} B_j(z - x, b - a) B_{j + q - 1}(z - y, b - a) \\
&+ \sum_{j \ge 0} B_{j - 1}(z - x, b - a) B_{j + q} (z - y, b - a) dz\\
&= \sum_{k \ge 0} B_k(b - x, b - a) B_{k + q}(b - y, b - a) - \sum_{k \ge 0} B_k(z - x, b - a) B_{k + q}(z - y, b - a)\big|_y^b = 0.
\end{align*}
}
When $q < 0$, the coefficient is
{\small
\begin{align*}
&\sum_{k\ge 0} B_k(b - y, b - a) B_{k + q}(B - x, b - a) - B_q(y - x, b - a) \\
&+ \int_y^b - B_1(b - a, z - y) B_q(z - x, b - a) + \sum_{k \ge 0} B_{k + q - 1}(z - x, b - a) B_k(z - y, b - a) \\
&+ \sum_{k \ge 0} B_{k + q + 1}(z - x, b - a) B_k(z - y, b - a) dz\\
&= \sum_{k\ge 0} B_k(b - y, b - a) B_{k + q}(B - x, b - a) - B_q(y - x, b - a) \\
&+ \sum_{k \ge 0} \int_y^b B_{k + q - 1}(z - x, b - a) B_k(z - y, b - a) + B_{k + q}(z - x, b - a) B_{k - 1}(z - y, b - a) dz\\
&= \sum_{k\ge 0} B_k(b - y, b - a) B_{k + q}(B - x, b - a) - B_q(y - x, b - a) \\
&- \sum_{k \ge 0} B_{k + q}(z - x, b - a) B_k(z - y, b - a)\big|_y^b = 0.
\end{align*}
}
\end{proof}

\end{document}